\newcommand{\bfa}{\mathbf{a}}
\newcommand{\bfb}{\mathbf{b}}
\newcommand{\bfg}{\mathbf{g}}
\newcommand{\bfh}{\mathbf{h}}
\newcommand{\bfs}{\mathbf{s}}
\newcommand{\bfw}{\mathbf{w}}
\newcommand{\bfx}{\mathbf{x}}
\newcommand{\bfy}{\mathbf{y}}
\newcommand{\bfz}{\mathbf{z}}
\newcommand{\calA}{\mathcal{A}}
\newcommand{\calB}{\mathcal{B}}
\newcommand{\calC}{\mathcal{C}}
\newcommand{\calD}{\mathcal{D}}
\newcommand{\calM}{\mathcal{M}}
\newcommand{\calN}{\mathcal{N}}
\newcommand{\calO}{\mathcal{O}}
\newcommand{\calH}{\mathcal{H}}
\newcommand{\bbC}{\mathbb{C}}
\newcommand{\bbR}{\mathbb{R}}
\newcommand{\bbN}{\mathbb{N}}
\newcommand{\bbE}{\mathbb{E}}
\newcommand{\bbZ}{\mathbb{Z}}
\newcommand{\nn}{\nonumber}
\newcommand{\bsth}{\boldsymbol{\theta}}
\newcounter{relctr} 
\newcommand\lr[2]{%
  \begingroup
    \refstepcounter{relctr}%
    ~\underset{\textnormal{(\alph{relctr})}}{\mathstrut{#1}}~%
    \originallabel{#2}%
  \endgroup
}
\newtheorem{theorem}{Theorem}
\newtheorem{lemma}{Lemma}
\begin{document}

%
\title{Federated Learning Meets Fluid Antenna: Towards Robust and Scalable Edge Intelligence}
%
%
%

\author{Sangjun~Park,~\IEEEmembership{Member,~IEEE,}
        and Hyowoon~Seo,~\IEEEmembership{Member,~IEEE,}
		\thanks{
			S.~Park is with the Department of Electronics and Communications Engineering, Kwangwoon University, Seoul 01897, Korea (e-mail: sangjunpark@kw.ac.kr). 
		}
        \thanks{H.~Seo is with the Department of Electrical and Computer Engineering, Sungkyunkwan University, Suwon 16419, South Korea (e-mail: hyowoonseo@skku.edu)}
        \thanks{(\emph{Corresponding Author: Hyowoon Seo})}
        \vspace{-20pt}
}
\maketitle

\thispagestyle{firstpage}

\begin{abstract}
Federated learning (FL) is an emerging machine learning paradigm with immense potential to support advanced services and applications in future industries. However, when deployed over wireless communication systems, FL suffers from significant communication overhead, which can be alleviated by integrating over-the-air computation (AirComp). Despite its advantages, AirComp introduces learning inaccuracies due to the inherent randomness of wireless channels, which can degrade overall learning performance. To address this issue, this paper explores the integration of fluid antenna systems (FAS) into AirComp-based FL to enhance system robustness and efficiency. Fluid antennas offer dynamic spatial diversity by adaptively selecting antenna ports, thereby mitigating channel variations and improving signal aggregation. Specifically, we propose an antenna selection rule for fluid-antenna-equipped devices that optimally enhances learning robustness or training performance. Building on this, we develop a learning algorithm and provide a theoretical convergence analysis. The simulation results validate the effectiveness of fluid antennas in improving FL performance, demonstrating their potential as a key enabler for wireless AI applications.
\end{abstract}

\begin{IEEEkeywords}
Fluid antenna system, federated learning, over-the-air computation
\end{IEEEkeywords}

%
\IEEEpeerreviewmaketitle

\section{Introduction}

The rapid advancement of machine learning (ML) has driven significant transformations across industries, with large language models (LLMs) like ChatGPT showcasing the vast potential of artificial intelligence. These models have revolutionized natural language processing, healthcare, and autonomous systems by providing real-time insights and enabling data-driven decision-making \cite{arXiv15_Konecny, SIGSAC15_Shokri, AISTATS17_Mcmahan}. As ML adoption continues to expand, ensuring efficiency, privacy, and scalability in model deployment has become increasingly critical.

One of the primary challenges in large-scale ML deployment is the substantial computational power and data transfer requirements. Traditional ML approaches rely on centralized training, where data from multiple sources is aggregated at a central server for model optimization. However, with the rise of distributed computing environments—including mobile devices, IoT networks, and edge servers—concerns related to data privacy, bandwidth limitations, and latency have become more pronounced \cite{WFL01, WFL02, WFL03, WFL04, WFL05}.

Federated learning (FL) presents a promising alternative by enabling decentralized model training, allowing devices to update models locally and share only the necessary updates instead of raw data \cite{arXiv15_Konecny, SIGSAC15_Shokri, AISTATS17_Mcmahan}. This approach mitigates privacy risks and alleviates communication overhead. However, deploying FL over wireless networks presents additional challenges, particularly in communication efficiency and scalability \cite{CommFL01, CommFL02, Mine01, CommFL03}.

In order to improve communication efficiency in FL, over-the-air computation (AirComp) has been introduced \cite{AFL01, AFL02, AFL03, AFL04, AFL05,Mine02,Mine03,Mine04}. Using the superposition property of wireless signals, AirComp enables the simultaneous transmission and aggregation of model updates, reducing latency and improving scalability \cite{AirCompFL01, AirCompFL02, AirCompFL03}. However, over-the-air FL—the integration of AirComp and FL—remains susceptible to signal interference and noise, limiting its practical effectiveness.

To address these challenges, this work explores the integration of fluid antenna systems (FAS) \cite{FAS} into over-the-air FL. Recognized as a novel antenna paradigm, FAS enables dynamic antenna repositioning, offering spatial diversity that enhances signal quality and learning performance \cite{FAS, FAS_p2p01, FAS_p2p02, FAS_p2p03, FAS_p2p04, FAS_MAC01, FAS_MAC02, FAS_MAC03, AirCompFAS01, AirCompFAS02}. By incorporating FAS into over-the-air FL, this paper proposes a novel framework, dubbed FAS-aided over-the-air FL (FAir-FL), which enhances both accuracy and robustness in distributed learning over wireless networks.

\subsection{Related Works}
Naturally, the key building blocks of the FAir-FL framework are wireless FL and FAS. While both technologies have been actively researched independently, their integration has yet to be thoroughly explored. However, FAir-FL is built upon a solid foundation of existing studies. The following summarizes the related work.

\textbf{Wireless Federated Learning} \quad
FL has been extensively studied in wireless networks due to its ability to facilitate collaborative model training while preserving data privacy \cite{WFL01, WFL02, WFL03, WFL04, WFL05}. Cellular networks naturally complement FL, as base stations can coordinate model updates among distributed devices. However, a major challenge remains: the high communication cost associated with transmitting local model updates (LMUs) \cite{CommFL01, CommFL02, Mine01, CommFL03}.

To address this issue, researchers have explored various optimization techniques such as model compression, quantization, and sparsification \cite{CommFL01, CommFL02, Mine01}. For instance, predictive coding-based compression and binary neural networks have been proposed to reduce transmission overhead \cite{CommFL02}. Additionally, high-altitude platforms and satellite-based FL systems have been investigated as potential solutions to enhance communication efficiency in large-scale deployments \cite{CommFL03}.

Among these advancements, AirComp has emerged as a key technology for alleviating communication bottlenecks in FL by enabling simultaneous model update transmissions \cite{AFL01, AFL02, AFL03, AFL04, AFL05, Mine02, Mine03, Mine04}. Recent studies have focused on optimizing over-the-air FL by addressing challenges such as interference management, adaptive aggregation mechanisms, and integration with digital twin (DT) edge networks \cite{AirCompFL01, AirCompFL02, AirCompFL03}.

Despite these developments, several open challenges remain. Most existing studies on FL aggregation fail to account for fluctuating wireless channel conditions, which can significantly impact model convergence and overall performance. Moreover, current over-the-air FL methods generally assume static antenna configurations, overlooking the potential benefits of dynamically adjustable antennas for mitigating interference and enhancing signal reception. These research gaps underscore the need for innovative solutions that improve the robustness and efficiency of FL in realistic wireless environments.

\textbf{Fluid Antenna Systems} \quad
FAS have recently gained interest as an alternative to conventional fixed-antenna designs, offering improved spatial diversity and adaptability \cite{FAS}. Unlike traditional multi-antenna systems, FAS enables dynamic antenna repositioning, making it particularly beneficial for space-constrained devices such as smartphones and IoT nodes \cite{FAS_p2p01, FAS_p2p02, FAS_p2p03, FAS_p2p04}.

The advantages of FAS extend to both single-user and multi-user scenarios, where dynamic antenna positioning enhances signal reception, reduces interference, and improves overall network performance \cite{FAS_MAC01, FAS_MAC02, FAS_MAC03}. Given its flexibility, recent research has explored its integration into over-the-air FL systems. Initial findings suggest that FAS can significantly boost communication efficiency and learning performance through optimized transceiver designs and intelligent antenna selection \cite{AirCompFAS01, AirCompFAS02}. However, further studies are needed to evaluate its scalability and real-world applicability.

Althoughugh FAS has shown promise in wireless communications, its application in over-the-air FL remains an underexplored area. Current research lacks a comprehensive analysis of how dynamic antenna positioning can enhance FL performance in noisy, interference-prone environments. Furthermore, the trade-off between antenna reconfiguration complexity and learning efficiency has not yet been fully investigated. These limitations serve as the main motivation for this work, which aims to bridge this research gap by developing FAir-FL, a framework designed to enhance wireless distributed learning through FAS-assisted over-the-air FL.

\subsection{Contribution and Organization}
The main contributions of this work are as follows:

\begin{itemize}
\item We develop a system model that integrates FAS with over-the-air FL, incorporating an adaptive antenna selection strategy and a detailed analysis of communication channels. Additionally, we propose a learning framework that accounts for both local training and global model aggregation.

\item We introduce an adaptive antenna selection mechanism tailored for FAir-FL. This mechanism includes two primary selection strategies—one focused on robustness and the other on model accuracy. Furthermore, we design a dynamic switching protocol that adjusts selection rules based on noise levels at the parameter server.

\item  We provide a rigorous mathematical analysis demonstrating the convergence properties of our FAir-FL framework.

\item Our results show superior performance compared to conventional fixed-antenna over-the-air FL in terms of communication efficiency, robustness, and model accuracy.

\end{itemize}

The rest of this paper is structured as follows. Section II provides the necessary preliminaries. Section III provides a detailed description of the proposed FAir-FL. Section IV offers numerical results validating the proposed method. Finally, Section VI concludes the paper and discusses potential directions for future research.

\emph{Notations:}
Throughout the article, scalars are written in a normal font, and vectors are written in a bold font. $\bbN$, $\bbZ^+$ $\bbR$ and $\bbC$ are sets of natural, positive integer, real and complex numbers, respectively. The $L^2$-norm is denoted as $\lVert\cdot\rVert$ for vectors and $\lvert\cdot\rvert$ for scalars, while $(\cdot)^{\mathsf{T}}$ represents the transpose. The element-wise product is represented by $\odot$. $\mathbf{0}_N$ and $\mathbf{1}_N$ denote $[0,\cdots,0]^{\mathsf{T}}\in\bbR^N$ and $[1,\cdots,1]^{\mathsf{T}}\in\bbR^N$, respectively, and $\mathbf{I}_N$ denotes the $N \times N$ identity matrix. For $a\in\bbN$, $[a]$  denotes $\{n \mid n\in\bbN, 1\le n\le a\}$. For set $\calA\subset\bbC$ and vector $\bfa=[a_1,\cdots,a_N]^{\mathsf{T}}\in\bbC^N$, $\mathbbm{1}_\calA(\bfa)\in\{0,1\}^N$ denotes element-wise indicator function where its $i$-th element is $1$ if $a_i\in\calA$ and $0$ otherwise.

\section{Preliminaries}
To present the core concepts of the proposed Fair-FL, this section provides an overview of the key theoretical background and the system model considered in this study.

\subsection{Theoretical Background}
\subsubsection{Federated Learning (FL)}
FL is a distributed machine learning approach that allows multiple devices to collaboratively train a global model while keeping the data localized, enhancing data privacy and security. The training process per epoch is organized as follows:

\begin{enumerate}
\item \textit{Local Training} - Devices train locally using their data, employing algorithms like stochastic gradient descent (SGD) to update model parameters. This retains data privacy by keeping sensitive information on-device.

\item \textit{Local Model Update Transmission} - Post-training, devices send their LMUs, encapsulating learned insights, to a central server.

\item \textit{Global Model Aggregation} - The central server uses methods like federated averaging (FedAvg) to merge these LMUs into a global model update (GMU), thus integrating knowledge from all nodes.

\item \textit{Global Model Update Dissemination} - The global model is then distributed to all devices, marking the beginning of the next training cycle. This iterative process aims to enhance model performance over time.
\end{enumerate}

\subsubsection{Fluid Antenna Systems (FAS)}
Fluid Antenna Systems (FAS) dynamically adjust their antenna positions within a device to optimize signal reception. Using either conductive liquids or arrays of antenna ports, FAS adapt to real-time channel conditions by selecting the best reception point, thereby minimizing interference and maximizing signal strength.

The compact arrangement of antenna ports within the device ensures channel correlation. The channel model for a single fluid antenna with $U$ ports is expressed as:
\begin{align}
h_{u}\! =\! \frac{1}{\sqrt{2}}\sqrt{1-\mu_u^2}(a_u+jb_u) + \frac{1}{\sqrt{2}}\mu_u (a_1+j b_1),\label{eq:FAS_channel}
\end{align}
where $\{a_u\}$ and $\{b_u\}$ are independent normal variables, and $\mu_u$ quantifies the port correlation, calculated using:
\begin{align}
\mu_u = J_0\bigg( \frac{2\pi (u-1)}{U-1}W \bigg),
\end{align}
with $J_0$ as the zero-order Bessel function, and $W$ as the normalized distance between ports. This model highlights how FAS can adaptively mitigate fading and interference, particularly beneficial in dynamic multipath or interference-heavy environments, thus improving spectral efficiency and communication reliability.

\subsection{System Model}  
\begin{figure}[t]
    \centering
    \subfloat{\includegraphics[width=0.48\textwidth]{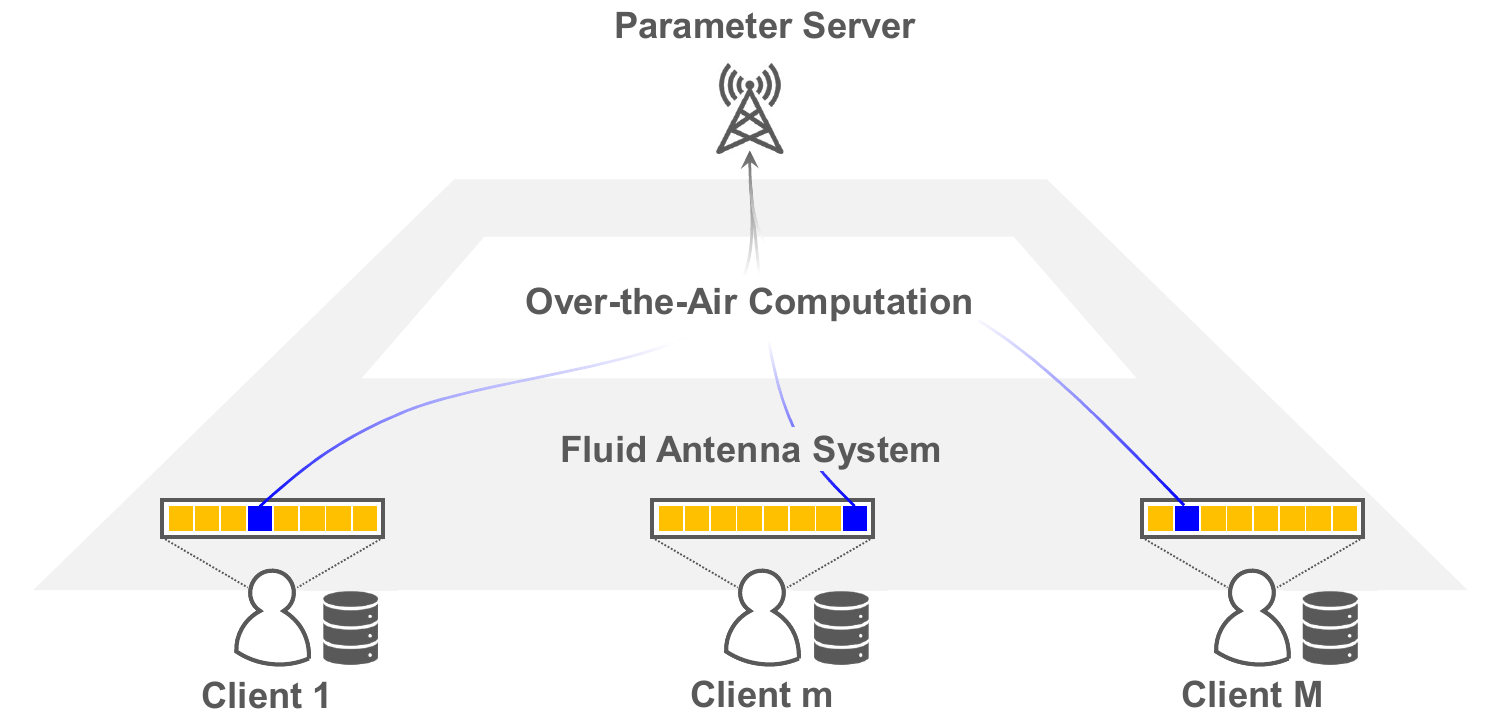}} 
    \caption{An illustration of FAS-aided over-the-air (FAir)-FL system \label{fig:systemmodel}}
\end{figure}
Throughout this paper, we consider a centralized system consisting of a single parameter server (hereafter referred to as a server) and $M \in \mathbb{Z}^{+}$ distributed wireless devices, each equipped with a single fluid antenna, as illustrated in Fig. \ref{fig:systemmodel}. Each device $m \in [M]$ maintains a local dataset $\mathcal{D}_m = \{\mathbf{s}_m^i \mid i \in [D_m] \}$ containing $D_m \in \mathbb{Z}^{+}$ training samples, where $\mathbf{s}_m^i$ represents the $i$-th training sample of device $m$.

The local loss function for device $m$ is defined as
\begin{align}
L_m(\boldsymbol{\theta}) = \frac{1}{D_m} \sum_{i=1}^{D_m} \ell(\mathbf{s}_m^i; \boldsymbol{\theta}),
\end{align}
where $\ell( \cdot ; \boldsymbol{\theta})$ is a sample-wise loss function parameterized by $\boldsymbol{\theta} \in \mathbb{R}^d$. In our experiments, we adopt cross-entropy as the sample-wise loss function, as the primary focus is on classification tasks.

The system's objective is to determine the optimal parameter $\boldsymbol{\theta}^* \in \mathbb{R}^{d}$ that minimizes the overall loss function, given by
\begin{align}
L(\boldsymbol{\theta}) = \sum_{m = 1}^{M} \rho_m L_m(\boldsymbol{\theta}),
\end{align}
where $\rho_m = D_m / \sum_{m = 1}^{M} D_m$ represents the proportion of the total training samples contributed by device $m$.

Similar to other FL algorithms, the system operates in global and local rounds. Global rounds are defined from the perspective of the GMUs, while local rounds correspond to LMUs. For simplicity, we assume that FAir-FL runs for $T \in \mathbb{Z}^{+}$ global rounds and $E \in \mathbb{Z}^{+}$ local rounds.

In each global round $t \in [T]$, communication takes place over $N \in \mathbb{Z}^{+}$ OFDM symbols, each comprising $F \in \mathbb{Z}^{+}$ subcarriers. Under the FAS channel model described in \eqref{eq:FAS_channel}, the channel from port $u$ of device $m$ at the $f$-th subcarrier in the $n$-th OFDM symbol is expressed as
\begin{align} \nonumber
h_{m,u,f}^{t,n} = \frac{1}{\sqrt{2}}& \sqrt{1 - \mu_u^2}\big(a_{m,u,f}^{t,n} + j b_{m,u,f}^{t,n} \big)\\
&\ \ \quad \quad + \frac{1}{\sqrt{2}} \mu_u \big(a_{m,1,f}^{t,n} + j b_{m,1,f}^{t,n} \big),
\end{align}
where $a_{m,u,f}^{t,n}$ and $b_{m,u,f}^{t,n}$, $\forall u\in[U]$ are independently and normally distributed random variables. Furthermore, we define
\begin{align}
\mathbf{h}_{m,u}^{t,n} = [h_{m,u,1}^{t,n},h_{m,u,2}^{t,n}, \dots,h_{m,u,F}^{t,n}]^{\mathsf{T}} \in \mathbb{C}^{F}
\end{align}
as the channel vector from port $u$ of device $m$ in the $n$-th OFDM symbol during the $t$-th global round.

The proposed FAir-FL integrates the characteristics of FAS into the conventional FL framework by incorporating an antenna port selection mechanism in each global round. Let $u_m^{t,n}$ denote the optimal antenna port selected for the $n$-th OFDM symbol on device $m$ during the $t$-th global round. For simplicity, we define the corresponding channel vector as
\begin{align}
\mathbf{h}_{m,u_{m}^{t,n}}^{t,n} \equiv \tilde{\mathbf{h}}_{m}^{t,n} = \left[\tilde{h}_{m,1}^{t,n},\tilde{h}_{m,2}^{t,n}, \dots,\tilde{h}_{m,F}^{t,n}\right]^{\mathsf{T}},
\end{align}
where that $\tilde{h}_{m,f}^{t,n} \equiv h_{m,u_m^{t,n},f}^{t,n}$, $\forall f \in [F]$.

Each OFDM symbol, denoted by $\bfx_{m}^{t,n} \in \mathbb{C}^{F}$ for some $m \in [M]$, $t \in [T]$, and $n \in [N]$, can convey $F$ real-valued numbers. We assume that transmission power of each OFDM symbol is bounded by $P$, that is  \begin{align}
    \lVert\bfx_m^{t,n}\rVert^2 \leq P,\quad \forall (m,t,n) \in [M]\times[T]\times[N].
\end{align}
Consequently, the received signal at the server, which is a superposition of signals from all $M$ devices, is given by
\begin{align}
\mathbf{y}^{t,n} = \sum_{m=1}^{M} \tilde{\mathbf{h}}_{m}^{t,n} \odot \mathbf{x}_{m}^{t,n} + \mathbf{w}^{t,n},
\end{align}
where $\mathbf{w}^{t,n} \sim \mathcal{CN}(\mathbf{0}_F, N_0 \mathbf{I}_F)$ represents additive noise with noise power $N_0$.

\begin{figure*}[t]
    \centering
\subfloat{\includegraphics[width=\textwidth]{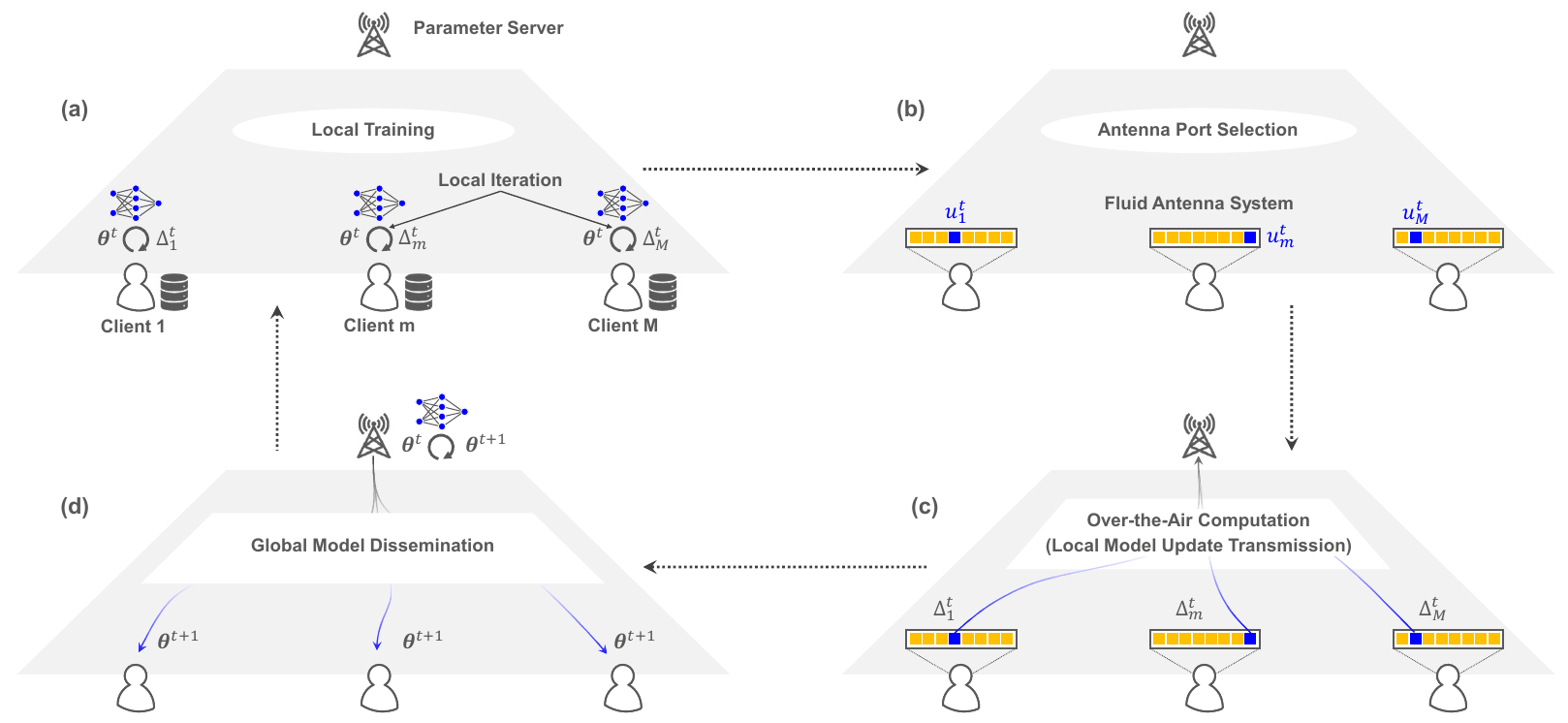}} 
    \caption{An illustration of the FAir-FL system's operational structure, depicted in sequential stages: (a) local training, (b) antenna port selection, (c) local model update aggregation using AirComp, (d) local model averaging and global model dissemination. This process is repeated until the model converges. \label{fig:operation}}
\end{figure*}

\section{Fluid Antenna System-aided Over-the-Air Federated Learning (FAir-FL)}
We now provide a detailed explanation of FAir-FL. First, we present a comprehensive description of the proposed algorithm, followed by a mathematical discussion on its convergence.

\subsection{Operational Structure of FAir-FL}
In essence, as shown in Fig. \ref{fig:operation}, FAir-FL operates as follows: At each global round, each device computes its LMU based on its local dataset. Considering the current channel conditions, it then selects the optimal antenna port and transmits the LMU to the server via the selected port using AirComp. The server aggregates the received LMUs to derive the GMU, which is subsequently used to update the global model for the next round.

Specifically elaborating on the update process, at the $t$-th global round, for some $t \in [T]$, device $m \in [M]$ updates the global parameter $\bsth^{t} \in \mathbb{R}^d$ and obtains its local parameter $\bsth_m^{t,E} \in \mathbb{R}^d$ after the $E$ local training rounds based on the batch-SGD algorithm. The parameter after $e$-th local round is represented as
    \begin{align}
        \bsth_m^{t,e}=\bsth_m^{t,e-1}-\lambda\bfg_m^{t,e}\in\bbR^d, \label{eq:localTrain}
    \end{align}
    for some $e\in[E]$, where 
    \begin{align}
        \bfg_m^{t,e}=\frac{1}{|\calB_m^{t,e}|}\sum_{\bfs\in\calB_m^{t,e}} \nabla l(\bfs;\bsth_m^{t,e-1})
    \end{align}
    is the gradient obtained in the $e$-th local round, and $\bsth_m^{t,0}=\bsth^{t}$, which means that the local training at the $t$-th global round is initialized by the global update $\bsth^t$. $\calB_{m}^{t,e}\subset \calD_m$ is the mini-batch at the $e$-th local round, whose size is fixed as $\lvert\calB_{m}^{t,e}\rvert=B$, $\forall m \in [M]$, $\forall t \in [T]$ and $\forall e \in [E]$.
    
    After the local updating process, the device $m$ transmits its LMU, defined by 
    \begin{align}
    \Delta_m^t =\bsth_m^{t,E} - \bsth^t =[\delta_{m,1}^t,\delta_{m,2}^t\cdots,\delta_{m,d}^t]^{\mathsf{T}},
    \end{align}
    to the parameter server through wireless medium.

    Suppose that the optimal antenna port $u_{m}^{t,n}$ is selected $\forall m \in [M]$.\footnote{The method for selecting the optimal antenna port will be discussed later in this section.} Then, the $n$-th OFDM symbol $\bfx_{m}^n$ for device $m$ is given by
    \begin{align} \bfx_{m}^{t,n} &= \gamma^{t,n} \rho_m \alpha_{m}^{t,n} \odot \mathbbm{1}_{\calH(\tau)}(\tilde{\bfh}_{m}^{t,n}) \odot \Delta_{m}^{t,n}, \label{eq:TXsymbol} \end{align}
    where $\gamma^{t,n}$ is a positive coefficient for power scaling, bounded as
    \begin{align} \gamma^{t,n} \le \frac{\sqrt{P}}{\beta^t} \min_{m} \left[ \sum_{f=1}^{F} {\underbrace{\lvert \tilde{h}_{m,f}^{t,n} \rvert^{-2} \mathbbm{1}_{\calH(\tau)}( \tilde{h}_{m,f}^{t,n})}_{\triangleq S_{m,f}^{t,n}}} \right]^{-\frac{1}{2}}, \label{ineq:gamma} \end{align}
    where $\beta^t$ denotes the maximum amplitude across all devices, defined as
    \begin{gather} \beta^t = \max_{m} \max_{j\in[d]} \lvert \delta_{m,j}^t \rvert. \label{def:beta} \end{gather}
    Here, $\calH(\tau)$ is the subset of complex numbers with magnitudes greater than or equal to the heuristically determined channel threshold $\tau$, defined as
    \begin{gather} \calH(\tau) = \{z\in\bbC \mid \lvert z\rvert \ge \tau\}. \end{gather}
    This means that only channels with gains exceeding $\tau$ are utilized for transmitting the LMU.
    The term $\alpha_m^{t,n} \in \bbC^F$ represents the element-wise reciprocal of $\tilde{\bfh}_m^{t,n}$, defined as
    \begin{gather} \alpha_{m}^{t,n} \odot \tilde{\bfh}_{m}^{t,n} = \mathbf{1}_F. \end{gather}
    Thus, $\alpha_m^{t,n}$ is used to compensate for the channel effects. The vector $\Delta_m^{t,n}$ represents the LMU of device $m$ transmitted through the $n$-th OFDM symbol, defined as
    \begin{gather} \Delta_{m}^{t,n} = [\delta_{m,(n-1)F+1}^t, \dots, \delta_{m,nF}^t]^{\mathsf{T}} \in \bbR^F. \end{gather}
    
    Notably, $\beta^t$ can be obtained through a simple one-round trip communication process: each device transmits its maximum LMU element magnitude, $\beta_{m}^t = \max_{j} \lvert \delta_{m,j}^t \rvert$, to the server. The server then computes the maximum value and broadcasts $\beta^t$ back to the devices.

    The transmit power of each device is bounded as
    \begin{align} \lVert\bfx_m^{t,n}\rVert^2 &\le (\gamma^{t,n})^2 \sum_{f=1}^{F} (\rho_m \delta_{m,(n-1)F+f}^t)^2 S_{m,f}^{t,n}, \end{align}
    which ensures compliance with the maximum transmit power constraint. The received signal can be expressed as
    \begin{align} \bfy^{t,n} = \gamma^{t,n} \sum_{m\in\calM} \rho_m \Delta_{m}^{t,n} \odot \mathbbm{1}_{\calH(\tau)}(\tilde{\bfh}_{m}^{t,n}) + \bfw^{t,n}. \label{eq:RXsymbol} \end{align}
    From the received signal, the aggregated update can be estimated as
    \begin{align} \hat{\Delta}^{t,n} &= \frac{\bfy^{t,n}}{\gamma^{t,n}}\\
    &= \sum_{m\in\calM} \rho_m \Delta_{m}^{t,n} \odot \mathbbm{1}_{\calH(\tau)}(\tilde{\bfh}_{m}^{t,n}) + \tilde{\bfw}^{t,n}, \label{eq:hatDel} \end{align}
    where $\tilde{\bfw}^{t,n} \sim \calC\calN(\mathbf{0}_F, N_0/(\gamma^{t,n})^2\mathbf{I}_F)$.

    The GMU is then constructed by concatenating the recovered aggregations:
    \begin{gather} \hat{\Delta}^t = [(\hat{\Delta}^{t,1})^{\mathsf{T}}, \dots, (\hat{\Delta}^{t,N})^{\mathsf{T}}]^{\mathsf{T}}. \end{gather}
    The corresponding additive noise is given by
    \begin{gather} \tilde{\bfw}^{t} = [\tilde{\bfw}^{t,1}, \dots, \tilde{\bfw}^{t,N}]^{\mathsf{T}}. \end{gather}

    Using the GMU, the server updates the global model as
    \begin{align} \bsth^{t+1} = \bsth^{t} + \hat{\Delta}^{t}, \end{align}
    and broadcasts the updated model to the devices. This iterative process continues to train the global model. The complete procedure is summarized in \textbf{Algorithm} \ref{alg:FAFL}.

\begin{algorithm}[t]
    \caption{FAir-FL Algorithm \label{alg:FAFL}}
    \begin{algorithmic}[1]
        \REQUIRE{$\bsth^1$}
        \ENSURE $\bsth^{T+1}$
        \FOR{$t\gets 1$ \TO $T$}
            \FOR{$m\gets1$ \TO $M$ in parallel}
                \STATE Device $m$ trains $\bsth_m^{t,E}$ from $\bsth^{t}$ as in \eqref{eq:localTrain} 
                \STATE Device $m$ computes $\Delta_m^t\gets \bsth^{t,E} - \bsth^{t}$
                \STATE Device $m$ finds $\beta_m^t\gets \max_j\lvert\delta_{m,j}^t \rvert$
                \STATE Device $m$ transmits $\beta_m^t$ to parameter server
            \ENDFOR
            \STATE Server obtains $\beta^t\gets\max_m \beta_m^t$
            \STATE Server broadcasts $\beta^t$ to devices
            \FOR{$n\gets1$ \TO $N$}
                \FOR{$m\gets1$ \TO $M$ in parallel}
                    \STATE Device $m$ selects antenna port $u_m^{t,n}$ as in \eqref{eq:selection_fin}
                    \STATE Device $m$ transmits $\bfx_m^{t,n}$ as in \eqref{eq:TXsymbol} to parameter server
                \ENDFOR
                \STATE Server receives $\bfy^{t,n}$ as in \eqref{eq:RXsymbol}
                \STATE Server recovers $\hat{\Delta}^{t,n}$ as in \eqref{eq:hatDel}
            \ENDFOR
            \STATE Server constructs $\hat{\Delta}^t\gets[(\Delta^{t,1})^{\mathsf{T}},\cdots,(\Delta^{t,N})^{\mathsf{T}}]^{\mathsf{T}}$
            \STATE Server updates $\bsth^{t+1}\gets\bsth^t+\hat{\Delta}^t$
            \STATE Server broadcasts $\bsth^{t+1}$ to devices
        \ENDFOR
    \end{algorithmic}
\end{algorithm}

\subsection{Convergence Analysis}

First, we adopt the following assumptions \textbf{(A1)}-\textbf{(A4)}, which are commonly used in convergence analysis within the literature \cite{Mine01, Mine02, Mine03, Mine04, Assumption}.

\noindent \textbf{(A1)} $L_m$ is a $\kappa$-smooth function, $\forall m\in [M]$, that is, $\forall (\bfx,\bfy) \in\bbR^{d\times d}$,
        \begin{align}
            \lVert\nabla L_m(\bfx)-\nabla L_m(\bfy) \rVert\le \kappa \lVert \bfx-\bfy \rVert, \label{eq:smooth}
        \end{align}
        or, equivalently,
        \begin{align}
            \lvert L(\bfy)-L(\bfx)-\nabla L(\bfx)^\mathsf{T}(\bfy-\bfx)\rvert
            \le \frac{\kappa}{2}\rVert\bfy-\bfx\lVert^2. \label{eq:smooth2}
        \end{align}

\noindent \textbf{(A2)} There exists $\sigma$ such that
        \begin{align}
            \bbE[{\lVert\nabla l(\bfs;\bsth)-\nabla L_m(\bsth)\rVert^2}] \le \sigma^2, \ \  \forall m \in [M].
            \label{eq:boundness1}
        \end{align}

\noindent \textbf{(A3)} $\bbE[\nabla l(\bfs;\bsth)] = \nabla L_m(\bsth)$,  $\ \forall m\in [M]$.
    
\noindent \textbf{(A4)} There exists $\delta^2$ such that
        \begin{align}
            \frac{1}{M}\sum_{m=1}^{M}\lVert \nabla L_m(\bsth) - \nabla L(\bsth) \rVert^2 \le \delta^2. \label{eq:diversity}
        \end{align}
Based on these assumptions, we can derive the following Theorem.
\begin{theorem} \label{lemma01}
    If $\lambda<\frac{e^{-\tau}}{32\kappa M E}$, then the following inequality holds.
    \begin{align}
    &\bbE\bigg[\frac{1}{T}\sum_{t=1}^{T}\lVert \nabla L(\bsth^{t}) \rVert^2 \bigg] 
    \le \frac{8(L(\bsth^1) - L(\bsth^*))}{\lambda e^{-\tau} E T} \nn \\
    &\qquad+(c_1+c_2)c_3+\frac{4\kappa N_0  F}{\lambda e^{-\tau} E T}\sum_{t=1}^{T}\sum_{n=1}^{N}\bbE[ ({\gamma^{t,n}})^{-2}], \label{eq:thm1}
\end{align}
where $c_1=32\lambda\kappa M E\sigma^2$, $c_2=32\lambda\kappa M^2 E \delta^2$ and $c_3=\lambda\kappa E + e^{\tau}$.
\end{theorem}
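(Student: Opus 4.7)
The plan is to follow the standard descent-lemma approach for nonconvex smooth stochastic optimization, adapted to the FAir-FL update in \eqref{eq:hatDel}. I start from the $\kappa$-smoothness of $L$ (inherited from \textbf{(A1)} and convexity of the weighting) evaluated at $(\bsth^t,\bsth^{t+1})$, which gives
\begin{align*}
L(\bsth^{t+1}) - L(\bsth^t) \le \nabla L(\bsth^t)^{\mathsf{T}}\hat{\Delta}^t + \frac{\kappa}{2}\lVert\hat{\Delta}^t\rVert^2.
\end{align*}
The target is to produce, after taking expectations, a dominant term of the form $-\tfrac{\lambda e^{-\tau} E}{2}\lVert \nabla L(\bsth^t)\rVert^2$ on the right, plus lower-order local-drift, noise, and heterogeneity residuals; rearranging, summing over $t=1,\dots,T$, and telescoping through $L(\bsth^1)-L(\bsth^{T+1}) \ge L(\bsth^1)-L(\bsth^*)$ will then yield \eqref{eq:thm1}.

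The first substantive step is to decompose $\hat{\Delta}^t$ into (i) a noise-free, untruncated aggregation $-\lambda\sum_{m,e}\rho_m \bfg_m^{t,e}$, (ii) a truncation residual from the mask $\mathbbm{1}_{\calH(\tau)}(\tilde{\bfh}_m^{t,n})$, and (iii) the effective channel noise $\tilde{\bfw}^t$. Because the port-selection rule and the FAS channel statistics in \eqref{eq:FAS_channel} will render $\mathbbm{1}_{\calH(\tau)}$ independent of $\Delta_m^t$ with entry-wise retention probability $e^{-\tau}$, the conditional expectation of $\hat{\Delta}^t$ given the past equals $-\lambda e^{-\tau}\sum_{m,e}\rho_m\,\mathbb{E}[\bfg_m^{t,e}]$. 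Invoking \textbf{(A3)} this equals $-\lambda e^{-\tau}\sum_{m,e}\rho_m \nabla L_m(\bsth_m^{t,e-1})$, which by \textbf{(A1)} is $-\lambda e^{-\tau} E\,\nabla L(\bsth^t)$ plus a drift error controlled by $\mathbb{E}\lVert\bsth_m^{t,e-1}-\bsth^t\rVert^2$.

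The second substantive step is the local-SGD drift recursion: iterating \eqref{eq:localTrain} gives $\mathbb{E}\lVert\bsth_m^{t,e}-\bsth^t\rVert^2 \le \lambda^2 e\sum_{e'\le e}\mathbb{E}\lVert\bfg_m^{t,e'}\rVert^2$, and decomposing $\bfg_m^{t,e}$ into its mean-zero SGD component (bounded by $\sigma^2$ via \textbf{(A2)}), its heterogeneity offset (bounded by $\delta^2$ via \textbf{(A4)}), and $\nabla L(\bsth^t)$ plus a smoothness residual, closes into a contraction under the stepsize condition $\lambda < e^{-\tau}/(32\kappa M E)$. This produces $c_1 = 32\lambda\kappa ME\sigma^2$ (SGD variance) and $c_2 = 32\lambda\kappa M^2 E\delta^2$ (heterogeneity), while $c_3 = \lambda\kappa E + e^{\tau}$ emerges as the combined amplification from $E$ local steps and the $e^{\tau}$ reinflation needed when re-expressing the truncated descent as a full-gradient step. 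For the quadratic term $\lVert\hat{\Delta}^t\rVert^2$, I apply $\lVert a+b+c\rVert^2\le 3(\lVert a\rVert^2+\lVert b\rVert^2+\lVert c\rVert^2)$ and use $\mathbb{E}\lVert\tilde{\bfw}^{t,n}\rVert^2 = FN_0/(\gamma^{t,n})^2$, producing the final noise term in \eqref{eq:thm1}.

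The hardest part, I expect, will be the bookkeeping in the first step: cleanly extracting the $e^{-\tau}$ factor from the indicator mask while simultaneously tracking how the truncation interacts with the variance inequalities in the second step, so that the reinflation constant $e^{\tau}$ appears only additively inside $c_3$ and not multiplicatively against $c_1,c_2$. Verifying the entry-wise independence of the indicator mask from the LMU coordinates, together with the marginal $e^{-\tau}$ retention probability induced by the FAS selection rule \eqref{eq:selection_fin}, is the linchpin; once this is in place, the remaining algebra, namely telescoping the descent inequality, dividing by $\lambda e^{-\tau} ET/8$, and substituting the noise variance, is mechanical.
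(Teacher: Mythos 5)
Your plan follows essentially the same route as the paper's proof: the $\kappa$-smoothness descent inequality, extraction of the $e^{-\tau}$ retention factor from the indicator mask via $\bbE[\mathbbm{1}_{\calH(\tau)}(\tilde h_{m,f})]=e^{-\tau}$ together with \textbf{(A3)}, a local-drift recursion on $\sum_{m,e}\bbE\lVert\bfg_m^{t,e}\rVert^2$ closed under the stepsize condition, and the noise variance $\bbE\lVert\tilde{\bfw}^{t,n}\rVert^2=N_0F/(\gamma^{t,n})^2$, followed by telescoping and division by $\lambda e^{-\tau}ET/8$. The one deviation is your treatment of the quadratic term: bounding $\lVert\hat{\Delta}^t\rVert^2$ by $3(\lVert a\rVert^2+\lVert b\rVert^2+\lVert c\rVert^2)$ over a three-way split (untruncated signal, truncation residual, noise) inflates every constant by a factor of about three and so does not recover the stated $c_1,c_2,c_3$ and $4\kappa N_0F$ coefficients; the paper instead uses the exact orthogonal decomposition $\bbE\lVert s+\tilde{\bfw}\rVert^2=\bbE\lVert s\rVert^2+\bbE\lVert\tilde{\bfw}\rVert^2$ (the cross term vanishes since $\tilde{\bfw}$ is zero-mean and independent) and then drops the mask via $\lVert\bfa\odot\mathbbm{1}_{\calA}(\bfb)\rVert\le\lVert\bfa\rVert$, which you should adopt to hit the exact constants. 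Your identification of the mask/LMU independence and the $e^{-\tau}$ marginal as the linchpin is apt; the paper handles this by asserting uniform-at-random port statistics at that step, so no further machinery is needed there.
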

\begin{IEEEproof}
    The proof is provided in Appendix B.
\end{IEEEproof}
From the above result, we observe that if the right-hand side of the inequality \eqref{eq:thm1} becomes smaller, the proposed training algorithm results in faster convergence. Therefore, to achieve fast convergence, each device should properly select its antenna port to minimize the term $\bbE[(\gamma^{t,n})^{-2}]$, $\forall (t,n) \in [T]\times[N]$.

\subsection{Optimal Antenna Port Selection}
For the sake of simplicity and without loss of generality, we omit the superscripts $t$ and $n$, respectively representing global round and OFDM symbol indices, for the remainder of this section.

Recall that for given $u_m$, $\forall m\in [M]$, $\gamma$ is bounded as in \eqref{ineq:gamma}, with its maximum value simply denoted by $\bar{\gamma}$:
\begin{align} \gamma \leq \bar{\gamma} = \frac{\sqrt{P}}{\beta} \min_{m}\left[\sum_{f=1}^{F}S_{m,f} \right]^{-\frac{1}{2}}. \label{eq:maxgamma} \end{align}

As we aim to minimize $\bbE[\gamma^{-2}]$ for fast convergence, it is desirable to maximize $\bar{\gamma}$. Accordingly it is also desirable to minimize $\sum_{f=1}^{F} S_{m,f}$, $\forall m \in [M]$. Consequently, a promising approach for antenna port selection is:
\begin{align} u_{m}=\arg\min_{u} \sum_{f=1}^{F}\lvert h_{m,u,f} \rvert^{-2} \mathbbm{1}_{\calH(\tau)}( h_{m,u,f}). \label{eq:select01} \end{align}
We define $\bar{\gamma}_1$ analogously to \eqref{eq:maxgamma}, corresponding to the selection criterion in \eqref{eq:select01}.

On the other hand, considering that transmitting more LMUs enhances training performance, each device $m$ should transmit its parameters whenever the corresponding channel gains exceed a given threshold. In this case, the antenna port $u_m$ is selected as follows:
\begin{align} u_{m} &= \arg\max_u\sum_{f=1}^{F}\mathbbm{1}_{\calH(\tau)} (h_{m,u,f})\\ &=\arg\max_u \mathbf{1}_F^{\mathsf{T}}\mathbbm{1}_{\calH(\tau)} (\bfh_{m,u}) \label{eq:selection02} \end{align}
Similarly, we define $\bar{\gamma}_2$ as in \eqref{eq:maxgamma} for the selection strategy in \eqref{eq:selection02}.

The first selection rule \eqref{eq:select01} is designed to enhance robustness by reducing the additive noise power in the received GMU. By reducing the noise power, this rule ensures that the server receives a more stable LMU. 
In contrast, the second rule \eqref{eq:selection02} aims to improve accuracy by ensuring precise model updates. Since this rule allows a larger number of updates to be transmitted to the server, the global model can incorporate more diverse information which leads to improved learning accuracy.

To balance both robustness and accuracy, a hybrid approach is adopted, enabling an optimal selection strategy that adjusts dynamically based on the noise conditions. The choice between these two selection rules depends on the effective noise power at the parameter server, denoted as $N_0 / (2\bar{\gamma}_2)$. Specifically, the selection rule is determined as follows:
\begin{align} u_m&= \begin{cases} \text{r.h.s. of \eqref{eq:select01}},
&\frac{N_0}{2\bar{\gamma}_2} > \psi,\\ \text{r.h.s. of \eqref{eq:selection02}},
&\frac{N_0}{2\bar{\gamma}_2} \le \psi. \end{cases} \label{eq:selection_fin} \end{align}

Based on this selection criterion, we establish the following lemma.

\begin{lemma} \label{lemma02} Given $\bar{\gamma}$ in \eqref{eq:maxgamma} and the antenna port selection in \eqref{eq:selection_fin}, if $\lambda \leq \frac{1}{2\sqrt{2}\kappa E}$, then the expectation of $\bar{\gamma}^{-2}$ is bounded as follows:
\begin{align}
\bbE[\bar{\gamma}^{-2}] \le  \frac{8\lambda^2 M E^2 F \omega}{P\tau^2}
(   \sigma^2 +  M  \delta^2 + \bbE[\lVert \nabla L(\bsth^t) \rVert^2]).
\end{align}
where $\omega = 2 - (1-\hat{p})^{FMU}$ with $\hat{p} = e^{-\tau^2/2(1-\mu_{\min}^2)}$ and $\mu_{\min}=\min_u \mu_u$.
\end{lemma}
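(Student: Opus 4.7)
The plan is to start by rewriting $\bar\gamma^{-2}$ via its definition \eqref{eq:maxgamma}:
\begin{align*}
\bar\gamma^{-2} = \frac{\beta^{2}}{P}\,\max_{m}\sum_{f=1}^{F}S_{m,f}.
\end{align*}
Because the fluid-antenna channels $\{\tilde{h}_{m,f}\}$ are generated by the wireless environment and are independent of the local training (and hence of $\beta$), this factors as
\begin{align*}
\bbE[\bar\gamma^{-2}]
= \frac{1}{P}\,\bbE[\beta^{2}]\cdot \bbE\bigg[\max_{m}\sum_{f=1}^{F} S_{m,f}\bigg],
\end{align*}
so I would attack the two factors separately, one reflecting the learning dynamics and the other the FAS channel.

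For the learning-side factor $\bbE[\beta^{2}]$, I would bound $\beta^{2} = \max_m \max_j |\delta_{m,j}^{t}|^{2}$ by $\sum_m \lVert\Delta_m^{t}\rVert^{2}$ and expand $\Delta_m^{t}=-\lambda\sum_{e=1}^{E}\bfg_m^{t,e}$ via \eqref{eq:localTrain}. Cauchy--Schwarz gives $\lVert\Delta_m^{t}\rVert^{2} \leq \lambda^{2}E\sum_e \lVert\bfg_m^{t,e}\rVert^{2}$. I would then decompose $\bbE[\lVert\bfg_m^{t,e}\rVert^{2}]$ via \textbf{(A2)}--\textbf{(A3)} into $\sigma^{2}$ plus $\bbE[\lVert\nabla L_m(\bsth_m^{t,e-1})\rVert^{2}]$, and insert the references $\nabla L_m(\bsth^{t})$ and $\nabla L(\bsth^{t})$ so that smoothness \textbf{(A1)} converts the drift term into $\kappa^{2}\lVert\bsth_m^{t,e-1}-\bsth^{t}\rVert^{2}$ while the diversity bound \textbf{(A4)} controls the heterogeneity term by $M\delta^{2}$. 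This produces a recursive inequality in $e$ for $\sum_m \bbE[\lVert\bfg_m^{t,e}\rVert^{2}]$ whose past-round multiplier is proportional to $\kappa^{2}\lambda^{2}E^{2}$. Under $\lambda\leq 1/(2\sqrt{2}\kappa E)$ this multiplier is at most $1/4$, so the recursion is self-bounding and can be unrolled to yield
\begin{align*}
\bbE[\beta^{2}]\leq 8\lambda^{2}ME^{2}\big(\sigma^{2}+M\delta^{2}+\bbE[\lVert\nabla L(\bsth^{t})\rVert^{2}]\big).
\end{align*}

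For the channel-side factor, the reduction $S_{m,f}\leq \tau^{-2}\mathbbm{1}_{\calH(\tau)}(\tilde{h}_{m,f})$ turns the weighted sum into a count, giving $\max_m\sum_f S_{m,f}\leq \tau^{-2}\max_m N_m$ with $N_m = \sum_f \mathbbm{1}_{\calH(\tau)}(\tilde{h}_{m,f})$. A worst-case Chernoff-type tail estimate applied to the correlated Gaussian port channel \eqref{eq:FAS_channel} yields the per-port bound $P(|h_{m,u,f}|\geq\tau)\leq \hat p = e^{-\tau^{2}/[2(1-\mu_{\min}^{2})]}$, in which $\mu_{\min}$ absorbs the worst inter-port correlation. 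Combining the deterministic bound $N_m\leq F$ with a union-style estimate $P(\max_m N_m>0)\leq 1-(1-\hat p)^{FMU}$ over all port-subcarrier-device triples, and accounting for both regimes of the hybrid rule \eqref{eq:selection_fin}, I would arrive at $\bbE[\max_m N_m]\leq F\omega$ with $\omega = 2-(1-\hat p)^{FMU}$. Multiplying the learning-side and channel-side bounds reproduces the asserted inequality.

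The most delicate part is the channel-side step. The two selection rules \eqref{eq:select01}--\eqref{eq:selection02} minimize genuinely different functionals of the channel (a weighted inverse-square sum versus an unweighted count), yet the single bound has to cover both regimes of \eqref{eq:selection_fin}. Moreover, the $U$ FAS ports are coupled through the shared component $(a_{m,1,f}^{t,n},b_{m,1,f}^{t,n})$ in \eqref{eq:FAS_channel}, so expressing the tail parameter cleanly in terms of $\mu_{\min}$ alone forces a worst-case Chernoff step rather than an exact distributional calculation; this relaxation is what introduces the constant $2$ into $\omega$. Once this probabilistic ingredient is secured, the remaining combination with the learning-side estimate is purely mechanical.
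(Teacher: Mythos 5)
Your overall architecture coincides with the paper's: reduce to the rule-\eqref{eq:selection02} value of $\bar\gamma$ (valid because rule \eqref{eq:select01} minimizes $\sum_f S_{m,f}$ over ports, so $\bar\gamma^{-2}\le\bar\gamma_2^{-2}$ in both regimes of \eqref{eq:selection_fin}), factor $\bbE[\beta^2]$ from the channel term by independence, bound $\bbE[\beta^2]\le 8\lambda^2ME^2(\sigma^2+M\delta^2+\bbE[\lVert\nabla L(\bsth^t)\rVert^2])$ via the self-bounding gradient recursion (the paper's Lemma~\ref{lem:sumsum_g}), and bound the channel term by $\tau^{-2}$ times an expected maximum count. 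The learning-side half is correct and matches the paper essentially line by line.

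The channel-side step --- the part you yourself flag as delicate --- is where the proposal has a genuine gap, in two places. First, the per-port bound $P(|h_{m,u,f}|\ge\tau)\le\hat p=e^{-\tau^2/[2(1-\mu_{\min}^2)]}$ is not a valid marginal statement: each $h_{m,u,f}$ is $\mathcal{CN}(0,1)$ regardless of $\mu_u$ (the correlated construction preserves unit variance), so $P(|h_{m,u,f}|\ge\tau)=e^{-\tau^2}$ exactly, which exceeds $\hat p$ whenever $\mu_{\min}^2>1/2$. In the paper, $\hat p$ is not a marginal tail at all; it is the tail of the port-specific \emph{independent} component of $|h|^2$ evaluated at the \emph{halved} threshold $\tau^2/2$, obtained by splitting $\{(1-\mu_u^2)z_{m,u}+\mu_u^2 z_{m,1}\ge\tau^2\}$ into $\{z_{m,u}\ge p_u\}\cup\{z_{m,1}\ge q_u\}$. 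That splitting is precisely what decouples the $U$ ports, which otherwise share the common component $(a_{m,1,f},b_{m,1,f})$; without it, your product/union estimate over all $FMU$ triples has no justification. Second, the arithmetic you describe does not produce the stated constant: combining $N_m\le F$ with $P(\max_m N_m>0)\le 1-(1-\hat p)^{FMU}$ yields $\bbE[\max_m N_m]\le F\big(1-(1-\hat p)^{FMU}\big)$, i.e.\ $\omega=1-(1-\hat p)^{FMU}$, not $2-(1-\hat p)^{FMU}$. The ``$2$'' in the paper arises differently: after the indicator splitting, the majorizing count can reach $2F$, so the complementary-CDF integral $\int_0^{2F\tau^{-2}}(1-\Pr[\zeta^{-2}\le x])\,dx$ collects a full $F\tau^{-2}$ from $[0,F\tau^{-2})$ plus at most $(1-(1-\hat p)^{FMU})F\tau^{-2}$ from the remainder. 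Your stronger $\omega$ would of course imply the lemma if provable, but as written it rests on an independence across ports that the model does not provide and that you have not restored.
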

\begin{IEEEproof} The proof is provided in Appendix C. \end{IEEEproof}

Using Lemma \ref{lemma02}, we establish the convergence of the proposed algorithm in the following theorem.

\begin{theorem} \label{thm:thm01} Given $\bar{\gamma}$ in \eqref{eq:maxgamma} and the antenna port selection strategy in \eqref{eq:selection_fin}, if $\lambda<\min{\frac{P\tau^2 e^{-\tau}}{64\kappa N_0 M E F^2 N \omega},\frac{e^{-\tau}}{32\kappa M E}} $, then:
\begin{align}
    &\bbE\bigg[\frac{1}{T}\sum_{t=1}^{T}\lVert \nabla L(\bsth^{t}) \rVert^2 \bigg]  \nn \\
&\qquad\qquad\le \frac{16(L(\bsth^1) - L(\bsth^*))}{\lambda e^{-\tau} E T} +(\tilde{c}_1+\tilde{c}_2)\tilde{c}_3, \nn
\end{align}

where $\tilde{c}_1=64\lambda \kappa M E\sigma^2$, $\tilde{c}_2=64\lambda \kappa M^2 E \delta^2$, and $\tilde{c}_3=\lambda\kappa E + e^\tau + N_0 F^2 N \omega e^{\tau} P^{-1} \tau^{-2}$. 

Furthermore, for $\lambda=T^{-\frac{1}{2}}$,
\begin{align}
    &\bbE\bigg[\frac{1}{T}\sum_{t=1}^{T}\lVert \nabla L(\bsth^t)\bigg\rVert^2\bigg]=\calO(T^{-\frac{1}{2}}). \nn
\end{align}
\end{theorem}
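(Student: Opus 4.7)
\begin{IEEEproof}[Proof Proposal]
The strategy is to plug \textbf{Lemma \ref{lemma02}} into \textbf{Theorem \ref{lemma01}} and resolve the self-referential inequality this produces: Lemma \ref{lemma02}'s bound on $\bbE[\bar{\gamma}^{-2}]$ contains the very quantity $\bbE[\lVert\nabla L(\bsth^t)\rVert^2]$ that Theorem \ref{lemma01} is trying to bound, so I expect the proof to reduce to a single term-absorption trick whose legitimacy is exactly what the first condition on $\lambda$ is designed to ensure. The $\calO(T^{-1/2})$ rate will then drop out by inspection after setting $\lambda=T^{-1/2}$.

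First, I would set $\gamma^{t,n}=\bar{\gamma}^{t,n}$, saturating the bound in \eqref{ineq:gamma} (this is the optimal power choice and is consistent with the analysis in Lemma \ref{lemma02}). Then the noise term in Theorem \ref{lemma01}, namely $\frac{4\kappa N_0 F}{\lambda e^{-\tau} ET}\sum_t\sum_n\bbE[(\gamma^{t,n})^{-2}]$, can be upper-bounded by applying Lemma \ref{lemma02} to each summand. Summing over $n\in[N]$ contributes a factor of $N$, and summing over $t\in[T]$ and dividing by $T$ reproduces the average $\bbE[\tfrac{1}{T}\sum_t \lVert\nabla L(\bsth^t)\rVert^2]$ (plus two constant terms from $\sigma^2$ and $M\delta^2$). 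Denoting the left-hand side of the target bound by $A$, this leaves an inequality of the form $A\le B_0+B_1+K\cdot A$ where $K=\frac{32\kappa N_0 F^2 N\lambda ME\omega}{P\tau^2 e^{-\tau}}$ is the coefficient picked up from the gradient-squared term.

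Next, the hypothesis $\lambda<\frac{P\tau^2 e^{-\tau}}{64\kappa N_0 ME F^2 N\omega}$ is exactly what forces $K<\tfrac12$, so that $(1-K)^{-1}<2$. Dividing through and absorbing the factor of at most $2$ into the remaining right-hand side is what converts the coefficients $c_1,c_2$ of Theorem \ref{lemma01} into $\tilde c_1=2c_1$, $\tilde c_2=2c_2$ and replaces the bound $\frac{8(L(\bsth^1)-L(\bsth^*))}{\lambda e^{-\tau} ET}$ by $\frac{16(L(\bsth^1)-L(\bsth^*))}{\lambda e^{-\tau} ET}$. The residual noise-constant $2K(\sigma^2+M\delta^2)$ must then be rearranged and shown to equal $(\tilde c_1+\tilde c_2)\cdot N_0 F^2 N\omega e^\tau P^{-1}\tau^{-2}$, at which point the three pieces combine as $(\tilde c_1+\tilde c_2)(\lambda\kappa E + e^\tau + N_0 F^2 N\omega e^\tau P^{-1}\tau^{-2})=(\tilde c_1+\tilde c_2)\tilde c_3$. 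The second hypothesis $\lambda<\frac{e^{-\tau}}{32\kappa ME}$ is simply inherited from Theorem \ref{lemma01}, while $\lambda\le\frac{1}{2\sqrt{2}\kappa E}$ of Lemma \ref{lemma02} is implied by it.

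Finally, for the rate claim with $\lambda=T^{-1/2}$: the first term scales as $\frac{16(L(\bsth^1)-L(\bsth^*))}{e^{-\tau}E}\,T^{-1/2}$; in the second term, $\tilde c_1$ and $\tilde c_2$ are linear in $\lambda$, hence $\calO(T^{-1/2})$, while $\tilde c_3=\lambda\kappa E+e^\tau+N_0 F^2 N\omega e^\tau P^{-1}\tau^{-2}=\calO(1)$ since $\lambda\kappa E\to 0$. Thus $(\tilde c_1+\tilde c_2)\tilde c_3=\calO(T^{-1/2})$ and the stated rate follows. The only real obstacle I anticipate is the careful algebraic bookkeeping to certify that the three named constants in $\tilde c_3$ arise precisely from the absorption step in the prescribed form; the self-referential resolution itself is routine once the $\lambda$ threshold is in hand.
\end{IEEEproof}
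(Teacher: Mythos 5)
Your proposal is correct and follows essentially the same route as the paper's proof: substitute Lemma \ref{lemma02} into Theorem \ref{lemma01}, move the resulting $\frac{32\lambda\kappa N_0 MEF^2N\omega}{P\tau^2 e^{-\tau}}\,\bbE[\frac{1}{T}\sum_t\lVert\nabla L(\bsth^t)\rVert^2]$ term to the left, and use the first condition on $\lambda$ to make its coefficient at least $\tfrac12$ before doubling the remaining right-hand side. Your identification of $K$, the absorption constant, and the bookkeeping that yields $\tilde{c}_1,\tilde{c}_2,\tilde{c}_3$ all match the paper exactly.
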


\begin{IEEEproof} The proof is provided in Appendix D. \end{IEEEproof}

\section{Numerical results}
\begin{figure*}[t]
    \centering
    \subfloat[$P=-5$ {[}dBm{]}]{\includegraphics[width=0.24\textwidth]{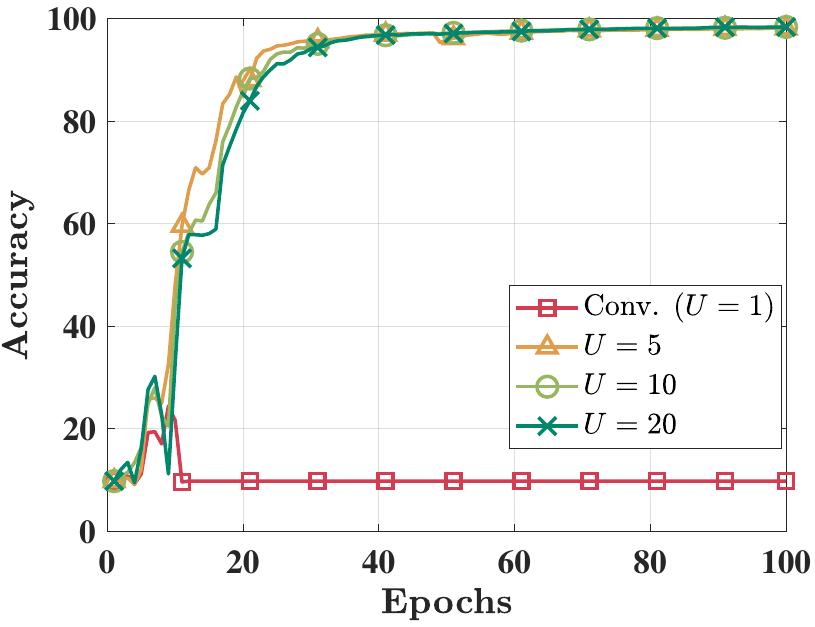}\label{fig:MNIST_P_m05}} 
    \hspace{1pt}
    \subfloat[$P=0$ {[}dBm{]}]{\includegraphics[width=0.24\textwidth]{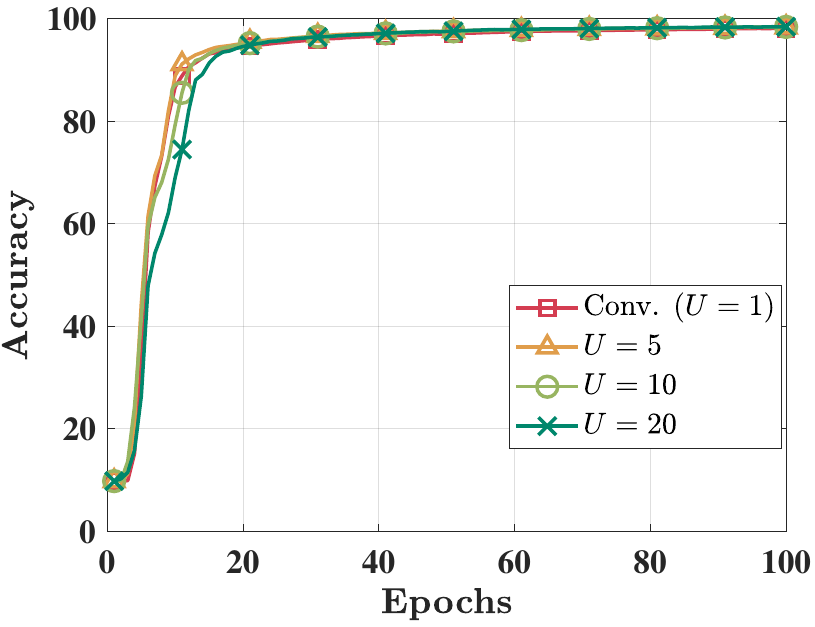}\label{fig:MNIST_P_000}} 
    \hspace{1pt}
    \subfloat[$P=5$ {[}dBm{]}]{\includegraphics[width=0.24\textwidth]{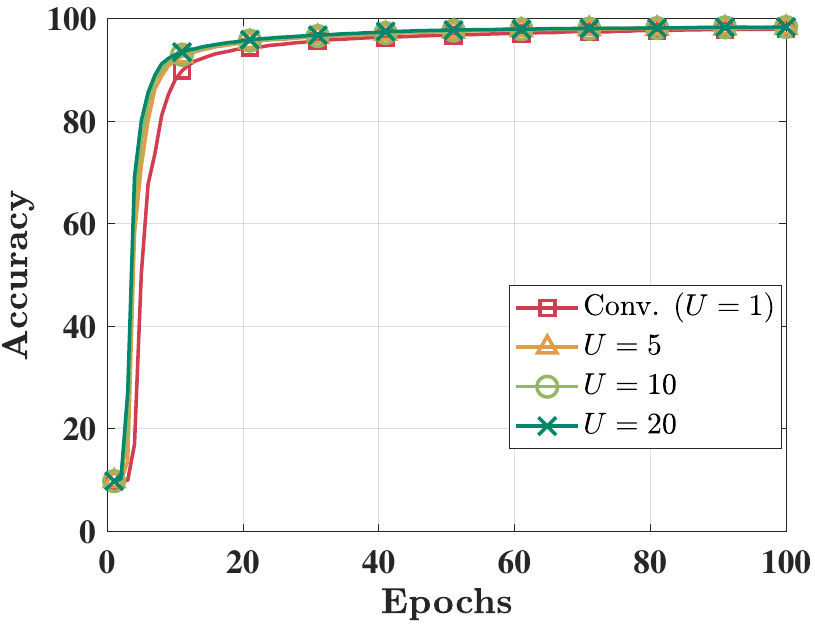}\label{fig:MNIST_P_005}} 
    \hspace{1pt}
    \subfloat[$P=10$ {[}dBm{]}]{\includegraphics[width=0.24\textwidth]{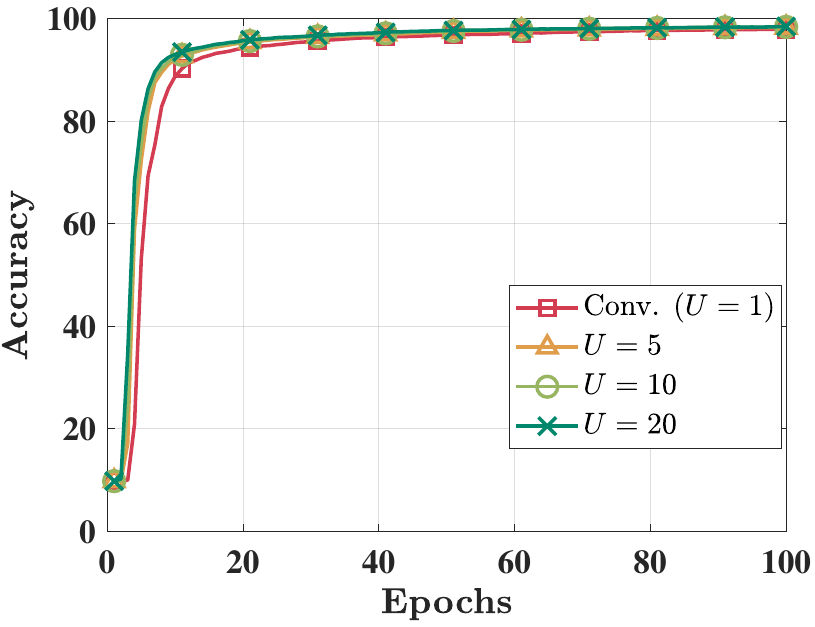}\label{fig:MNIST_P_010}} 
    \caption{Comparison of MNIST classifier performance for different numbers of antenna ports with a fixed transmit power budget.  \label{fig:MNIST_results_P}}
\end{figure*}

\begin{figure*}[t]
    \centering
    \subfloat[Conv. ($U=1$)]{\includegraphics[width=0.24\textwidth]{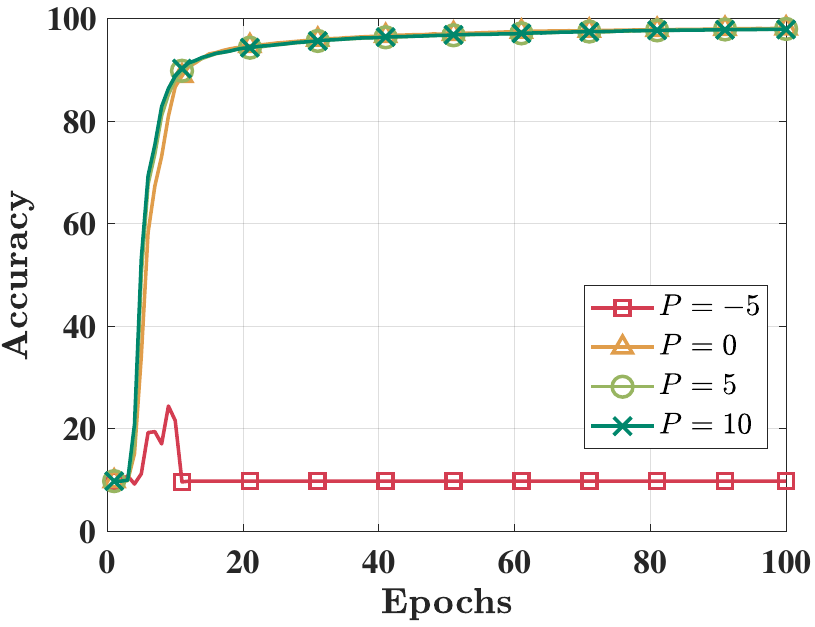}\label{fig:MNIST_U_001}} 
    \hspace{1pt}
    \subfloat[$U=5$]{\includegraphics[width=0.24\textwidth]{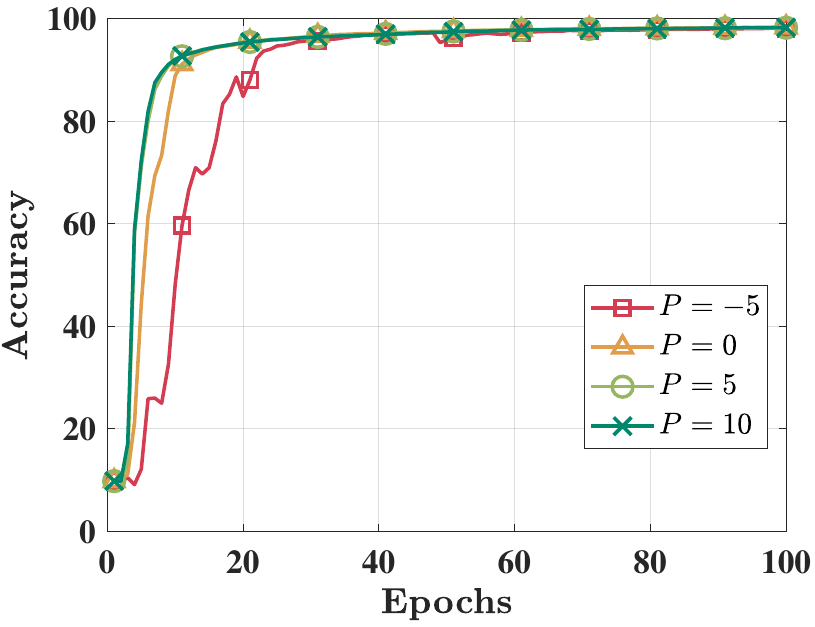}\label{fig:MNIST_U_005}} 
    \hspace{1pt}
    \subfloat[$U=10$]{\includegraphics[width=0.24\textwidth]{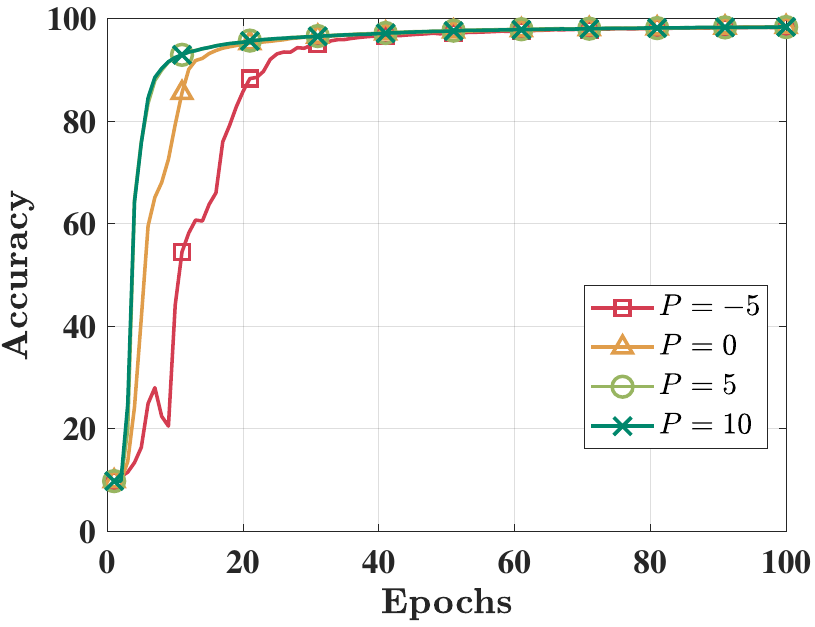}\label{fig:MNIST_U_010}} 
    \hspace{1pt}
    \subfloat[$U=20$]{\includegraphics[width=0.24\textwidth]{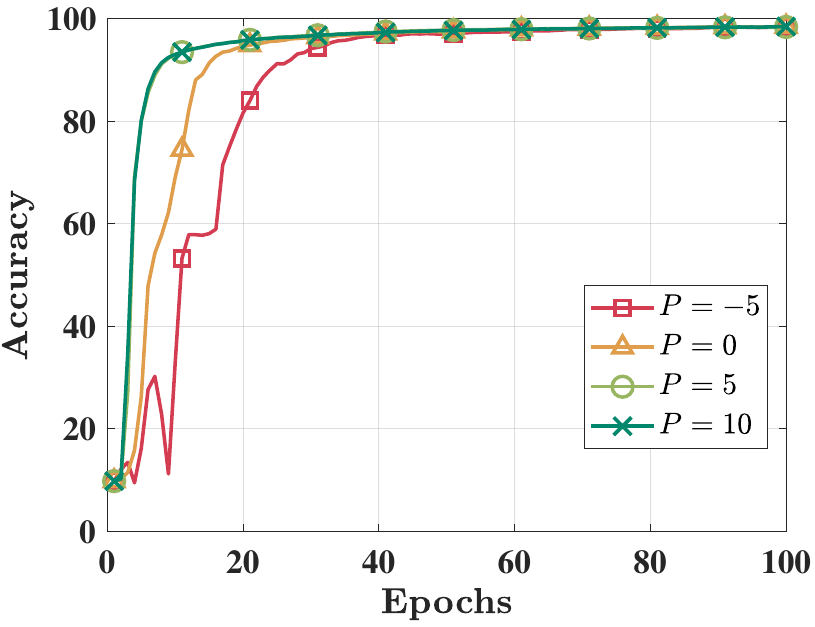}\label{fig:MNIST_U_020}} 
    \caption{Comparison of MNIST classifier performance for different transmit power budgets with the fixed number of antenna port. \label{fig:MNIST_results_U}}
\end{figure*}

\subsection{Simulation environment}
We developed an image classifier utilizing the MNIST \cite{MNIST} and CIFAR-10 \cite{CIFAR10} datasets, which are established benchmarks in image classification. The MNIST dataset includes 60,000 training and 10,000 test samples, while CIFAR-10 contains 50,000 training and 10,000 test samples across 10 classes.

Communication between devices and the server features wireless channels modeled with a width of $W=0.5$ and $F=64$ subcarriers per OFDM symbol. System noise is represented by a noise power of $N_0=-50$ dBm and a port selection threshold of $\psi=0.001$. The training employs mini-batch SGD with a batch size of $B=64$ and a learning rate of $\lambda=0.1$. Cross-entropy loss measures the discrepancy between predictions and actual labels, typical for multi-class classification.

\subsection{MNIST classifier}
The MNIST classifier employs a deep neural network featuring a mix of convolutional and fully connected layers. It includes three convolutional layers with 6, 16, and 120 filters respectively—the first two layers using $5\times5$ kernels and the last using a $4\times4$ kernel. A ReLu activation function follows each convolutional layer to enhance model expressiveness and mitigate vanishing gradients. The network also integrates two fully connected layers: the first with 84 neurons and the output layer with 10 neurons, matching the MNIST dataset's class count. The network comprises a total of $d=44,426$ trainable parameters. The setup involves $M=20$ devices, each holding $D_m=3,000$ unique, non-overlapping samples from the dataset.

\subsubsection{Various $U$}
Fig. \ref{fig:MNIST_results_P} shows training performance of classifiers trained by FAir-FL for the various number of antenna ports $U=1,5,10,20$, where FAir-FL when $U=1$ is the same as the conventional scheme without FAS. For all transmit power budgets, the conventional scheme represented by the square marked solid red line has worse performance than FAir-FL. Especially, for $P=-5$, the channel of a fixed single antenna has a much higher probability of experiencing deep fading which amplifies the power of the additive noise in the received GMU at server and consequently leads to training failure as shown in Fig. \ref{fig:MNIST_P_m05}. 

In addition, we observe that FAir-FL with multiple antenna ports successfully completes training in all four cases while achieving high performance in each case. This is because MNIST classification is relatively easy, and the training performance can be significantly improved even with a small number of antenna ports.

\subsubsection{Various $P$}
Fig. \ref{fig:MNIST_results_U} shows the training performance for various transmit power budgets $P=-5, 0, 5, 10$ with different numbers of antenna ports. Note that a higher transmit power budget with leads to better training performance in all scenarios. This is because higher transmit power increases $\gamma^{t,n}$ which results in the smaller noise level in the received GMU at the server. However, we observe that there is no performance gap between $P=5$ and $10$. This is because the noise level at the server becomes negligible when the device uses transmit power above 5, and thus the improvement in training performance is limited as transmit power increases.

\begin{figure*}[t]
    \centering
    \subfloat[$P=-2$ {[}dBm{]}]{\includegraphics[width=0.24\textwidth]{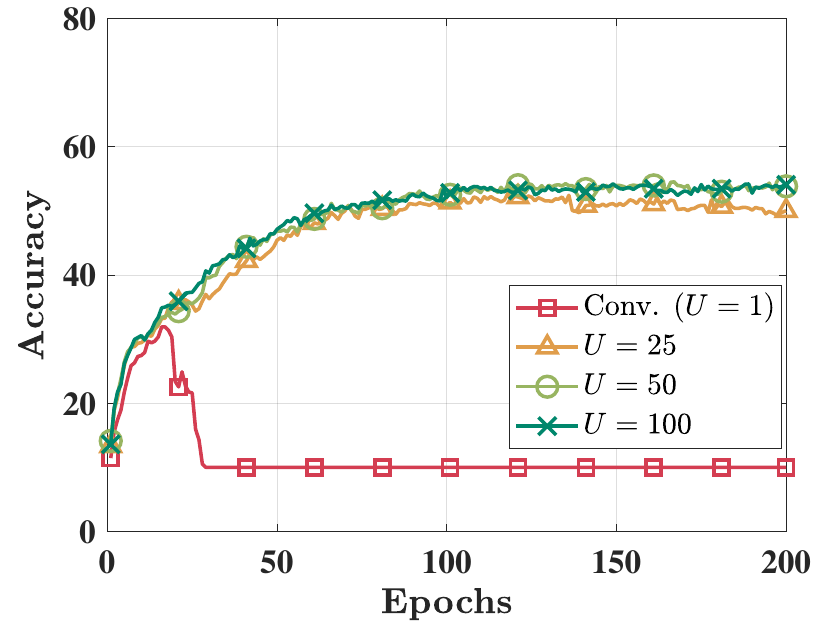}\label{fig:P_m02}} 
    \hspace{1pt}
    \subfloat[$P=0$ {[}dBm{]}]{\includegraphics[width=0.24\textwidth]{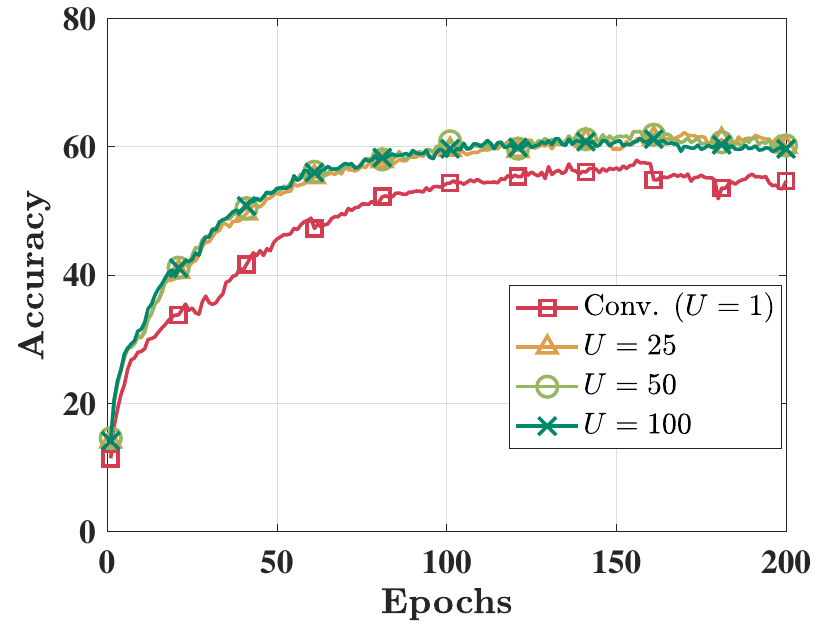}\label{fig:P_000}} 
    \hspace{1pt}
    \subfloat[$P=10$ {[}dBm{]}]{\includegraphics[width=0.24\textwidth]{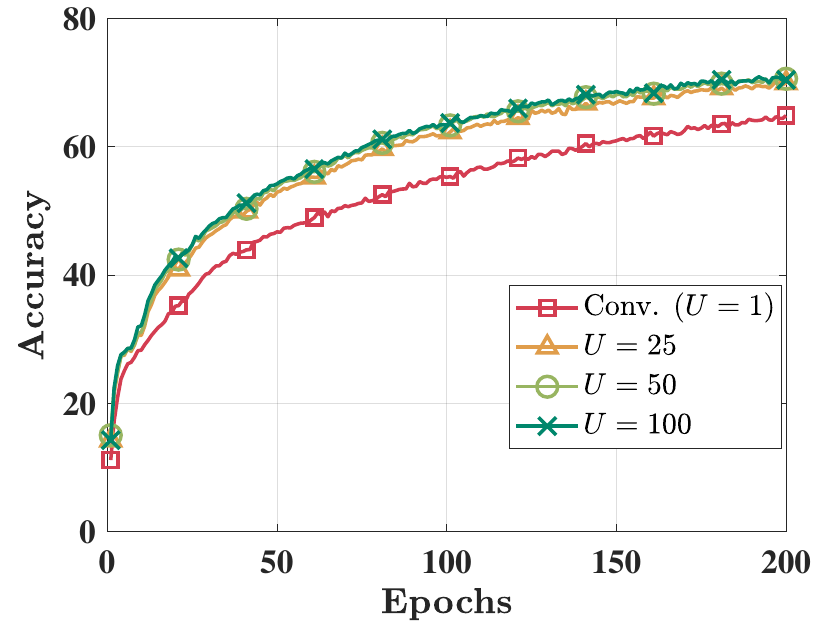}\label{fig:P_010}} 
    \hspace{1pt}
    \subfloat[$P=20$ {[}dBm{]}]{\includegraphics[width=0.24\textwidth]{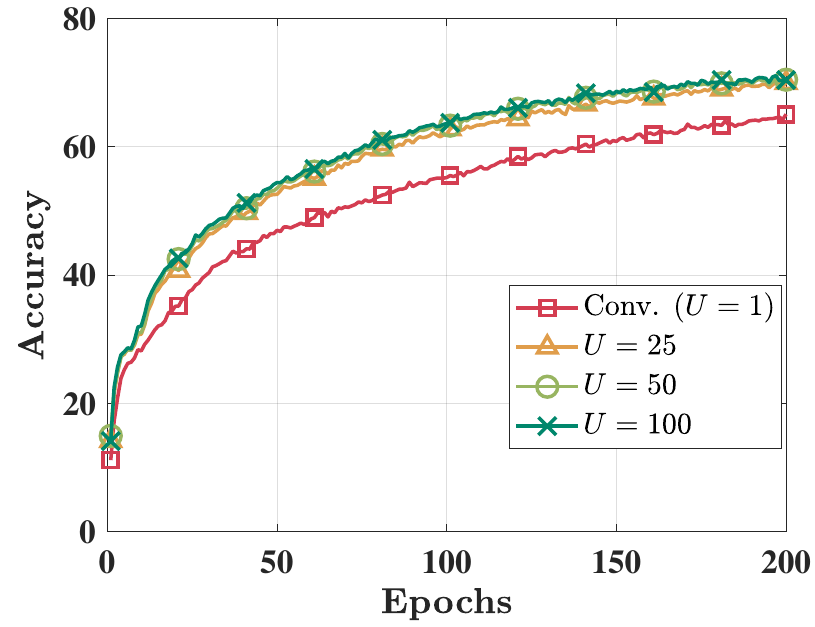}\label{fig:P_020}} 
    \caption{Comparison of CIFAR-10 classifier performance for different numbers of antenna ports with a fixed transmit power budget.  \label{fig:results_P}}
\end{figure*}

\begin{figure*}[t]
    \centering
    \subfloat[Conv. ($U=1$)]{\includegraphics[width=0.24\textwidth]{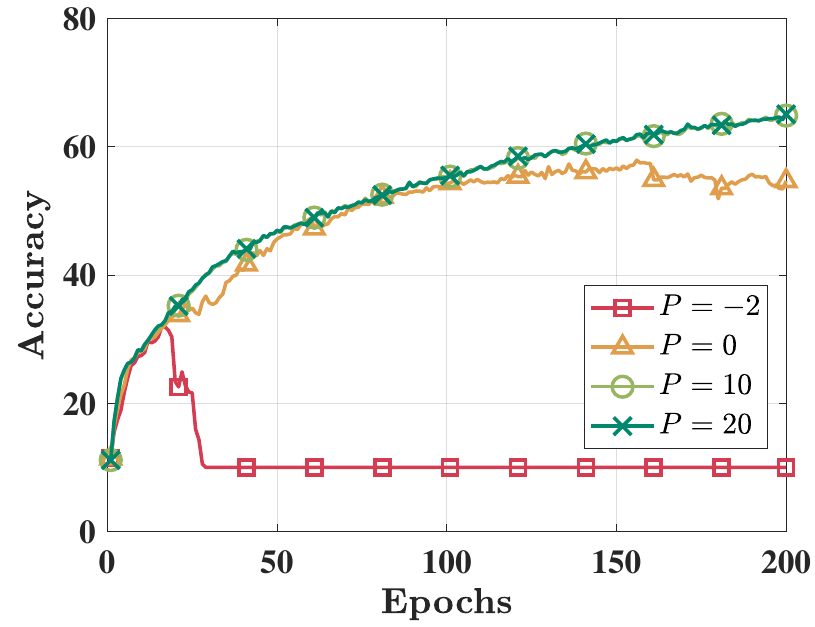}\label{fig:U_001}} 
    \hspace{1pt}
    \subfloat[$U=25$]{\includegraphics[width=0.24\textwidth]{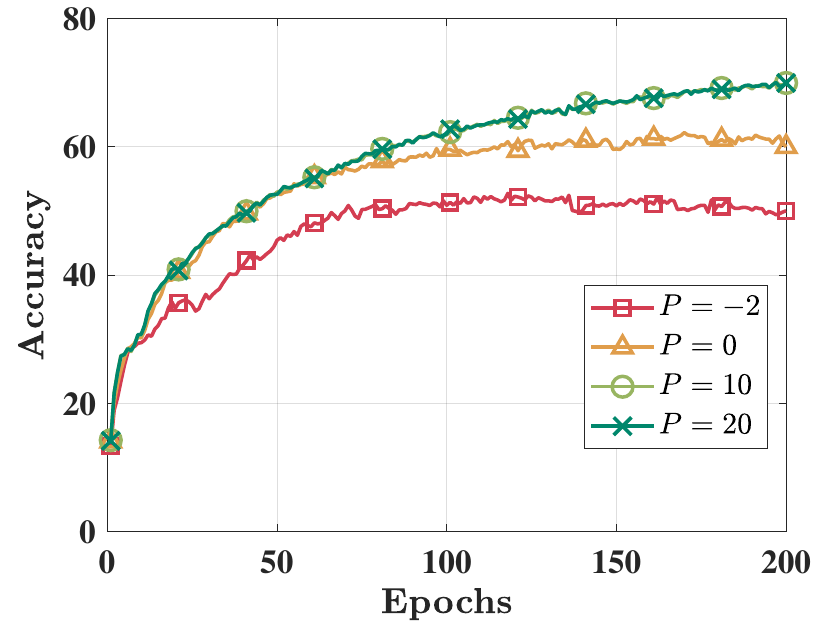}\label{fig:U_025}} 
    \hspace{1pt}
    \subfloat[$U=50$]{\includegraphics[width=0.24\textwidth]{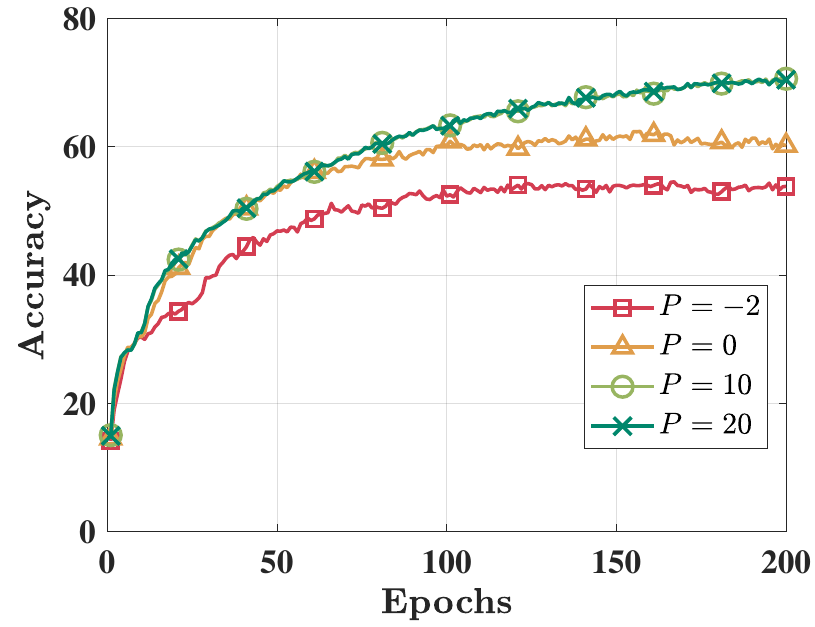}\label{fig:U_050}} 
    \hspace{1pt}
    \subfloat[$U=100$]{\includegraphics[width=0.24\textwidth]{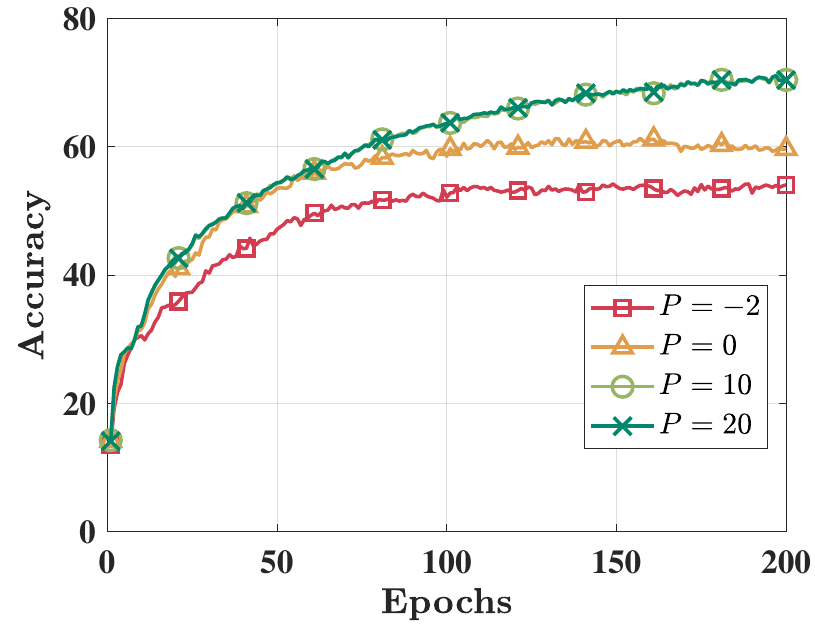}\label{fig:U_100}} 
    \caption{Comparison of CIFAR-10 classifier performance for different transmit power budgets with the fixed number of antenna port.  \label{fig:results_U}}
\end{figure*}

\subsection{CIFAR-10 classifier}
The CIFAR-10 classifier is designed as a deep neural network with a combination of convolutional and fully connected layers. Specifically, it comprises three convolutional layers followed by four fully connected layers. The convolutional layers utilize 16, 32, and 64 filters, respectively, where each filter has a kernel size of $3\times 3$. Each convolutional layer is followed by a nonlinear activation function ReLu. Following the convolutional layers, we implement two fully connected layers, where the first layer consists of 500 neurons, while the output layer contains 10 neurons corresponding to the 10 classes in CIFAR-10. The total number of trainable parameters in our network amounts to $d=541,094$.
We consider a total of $M=10$ devices where device contains $D_m=5,000$ unique samples where the data distribution is non-overlapping across different devices.

\subsubsection{Various $U$}
Fig. \ref{fig:results_P} illustrates the training performance of classifiers using FAir-FL across varying antenna ports ($U=1,25,50,100$) and transmit power budgets. The conventional scheme with a single antenna underperforms significantly compared to FAir-FL's multiple antenna configurations. Specifically, a single fixed antenna is susceptible to deep fading, negatively impacting performance at lower powers such as $P=-2$, where it fails to counteract the amplified noise power at the server, leading to training failures as depicted in Fig. \ref{fig:P_m02}.

FAir-FL with multiple antenna ports consistently completes training successfully across all configurations, demonstrating enhanced performance with an increase in antenna ports. At lower transmit powers ($P=-2$ and $0$), performance is compromised by elevated noise levels at the received GMU at the server. Devices mitigate this by selecting antenna ports that optimize $\gamma^{t,n}$, thereby controlling the additive noise level.

At higher powers ($P=10$ and $20$), training performance is influenced by the ability to transmit more parameters to the server. Devices enhance this capability by choosing antenna ports that maximize channel gains, increasing the number of transmitted parameters and thus improving training outcomes.

\subsubsection{Simulation results for various $P$}
Fig. \ref{fig:results_U} presents the training performance for various transmit power budgets $P=-2, 0, 10, 20$ with different numbers of antenna ports. As expected, a higher transmit power brings better training performance in all cases. This is because a higher transmit power budget increases $\gamma^{t,n}$ which results in a lower additive noise power in the received GMU at the server. Additionally, we observe that the performance gap between $P=10$ and $20$  is small. Similar to previous MNIST case, this is because when the transmit power budget exceeds 10, the noise level at the server becomes negligible, and thus further improvements in training are limited.

\section{Conclusion}
In this paper, we introduced the innovative FAir-FL algorithm, which optimizes antenna port selection based on the noise power levels at the parameter server. This approach enhances training stability under high noise conditions and improves performance in low noise scenarios. We have mathematically analyzed the convergence of FAir-FL under non-convex loss functions. Our simulation results confirm that using fluid antennas significantly outperforms fixed antennas in over-the-air FL settings, validating the efficacy of our proposed method.


%

\appendices
\section{Useful Lemmas}
\begin{lemma} \label{lem:bounded}
    For \textbf{Algorithm 1}, 
    \begin{align}
        \bbE[\lVert \bfg_m^{t,e} - \nabla L_m(\bsth_m^{t,e-1}) \rVert^2] \le \sigma^2. \nn
    \end{align}
\end{lemma}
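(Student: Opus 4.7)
The plan is to verify the bound by a short variance-of-average calculation at the mini-batch level, leveraging assumptions (A2) and (A3). The cleanest route uses Jensen's inequality and avoids any explicit independence hypothesis on the samples.

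First, I would rewrite the mini-batch gradient in centered form. Since $\bfg_m^{t,e}$ is the arithmetic mean of $\nabla l(\bfs;\bsth_m^{t,e-1})$ over $\bfs\in\calB_m^{t,e}$, and $\lvert\calB_m^{t,e}\rvert=B$, subtracting $\nabla L_m(\bsth_m^{t,e-1})$ from both sides gives
$$\bfg_m^{t,e} - \nabla L_m(\bsth_m^{t,e-1}) = \frac{1}{B}\sum_{\bfs\in\calB_m^{t,e}} \big(\nabla l(\bfs;\bsth_m^{t,e-1}) - \nabla L_m(\bsth_m^{t,e-1})\big).$$
Because $\lVert\cdot\rVert^2$ is convex, Jensen's inequality applied to this finite average yields
$$\lVert \bfg_m^{t,e} - \nabla L_m(\bsth_m^{t,e-1}) \rVert^2 \le \frac{1}{B}\sum_{\bfs\in\calB_m^{t,e}} \lVert \nabla l(\bfs;\bsth_m^{t,e-1}) - \nabla L_m(\bsth_m^{t,e-1}) \rVert^2.$$
Taking expectation of both sides, conditioning on $\bsth_m^{t,e-1}$ so that (A2) applies pointwise to each term, and then using the tower property bounds each of the $B$ summands by $\sigma^2$. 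The averaging prefactor $1/B$ then yields exactly $\sigma^2$, which is the claim.

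As an alternative route, if the mini-batch samples are treated as i.i.d. draws, one can expand the squared norm of the average and use (A3) to kill the cross terms; this gives the sharper bound $\sigma^2/B$, which is $\le\sigma^2$ whenever $B\ge1$. Either route is routine, and I do not anticipate any genuine obstacle: the lemma is essentially a lift of (A2) from a single sample to a size-$B$ mini-batch, and the only modeling point worth noting is that (A2) must be interpreted as holding uniformly in $\bsth$ so that it can be invoked at the (random) iterate $\bsth_m^{t,e-1}$.
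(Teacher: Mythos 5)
Your main argument is exactly the paper's proof: write $\bfg_m^{t,e}-\nabla L_m(\bsth_m^{t,e-1})$ as the average of the $B$ centered sample gradients, apply Jensen's inequality to pull the squared norm inside the average, and bound each summand by $\sigma^2$ via (A2). The proposal is correct and takes essentially the same route, so no further comparison is needed.
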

\begin{proof} 
    It can be derived as follows.
    \begin{align}
        &\bbE[\lVert \bfg_m^{t,e} - \nabla L_m(\bsth_m^{t,e-1}) \rVert^2]
        \nn \\
        &=\bbE\bigg[\bigg\lVert \frac{1}{B}\sum_{\bfs\in\calB_{m}^{t,e}}\big(\nabla l(\bfs;\bsth_m^{t,e-1}) - \nabla L_m(\bsth_m^{t,e-1})\big)\bigg\rVert^2\bigg] \nn \\
        &\le \frac{1}{B}\sum_{\bfs\in\calB_m^{t,e}} \bbE[\lVert\nabla l(\bfs;\bsth_m^{t,e-1}) - \nabla L_m(\bsth_m^{t,e-1})\rVert^2] \nn \\
        &\lr\le{lem_sigma} \sigma^2, \nn
    \end{align}
    where \eqref{lem_sigma} holds due to \eqref{eq:boundness1} in \textbf{Assumption 2}.
\end{proof}

\begin{lemma} \label{lem:sumsum_g}
    For \textbf{Algorithm 1}, if $\lambda\le \frac{1}{2\sqrt{2}\kappa E}$, then,
    \begin{align}
        &\sum_{m\in\calM}\sum_{e=1}^{E}\bbE[\lVert \bfg_m^{t,e} \rVert^2] \nn\\
        &\qquad\le  8 M E \sigma^2 + 8M^2E\delta^2 + 8ME\cdot \bbE[\lVert \nabla L(\bsth^t) \rVert^2]. \nn  
    \end{align}
\end{lemma}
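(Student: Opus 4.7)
The plan is to decompose each stochastic gradient $\bfg_m^{t,e}$ into a variance piece, a drift piece, a client-heterogeneity piece, and the global gradient at the current server iterate, then bound each piece separately while closing a recursion using the step-size assumption $\lambda \le \frac{1}{2\sqrt{2}\kappa E}$. Concretely, I would write
\begin{align*}
\bfg_m^{t,e} &= \bigl(\bfg_m^{t,e}-\nabla L_m(\bsth_m^{t,e-1})\bigr) + \bigl(\nabla L_m(\bsth_m^{t,e-1})-\nabla L_m(\bsth^t)\bigr)\\
&\quad+ \bigl(\nabla L_m(\bsth^t)-\nabla L(\bsth^t)\bigr) + \nabla L(\bsth^t),
\end{align*}
and apply $\lVert\sum_{i=1}^{4}\bfa_i\rVert^2 \le 4\sum_{i=1}^{4}\lVert\bfa_i\rVert^2$ term-by-term.

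Next, I would bound the four contributions. The variance term is handled directly by Lemma~\ref{lem:bounded}, contributing $\sigma^2$ per $(m,e)$. The heterogeneity term is bounded by Assumption (A4), contributing $M\delta^2$ per local round (after summing over $m$), and potentially loosened by a further Cauchy--Schwarz step to obtain the stated $M^2\delta^2$ factor. The global gradient term contributes $ME\lVert\nabla L(\bsth^t)\rVert^2$. The drift term is the delicate one: by $\kappa$-smoothness (A1),
\begin{align*}
\lVert\nabla L_m(\bsth_m^{t,e-1})-\nabla L_m(\bsth^t)\rVert^2 \le \kappa^2\lVert\bsth_m^{t,e-1}-\bsth^t\rVert^2,
\end{align*}
and using the local update rule \eqref{eq:localTrain}, $\bsth_m^{t,e-1}-\bsth^t = -\lambda\sum_{j=1}^{e-1}\bfg_m^{t,j}$, so a Cauchy--Schwarz step produces $\kappa^2\lambda^2(e-1)\sum_{j=1}^{e-1}\lVert\bfg_m^{t,j}\rVert^2$.

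The main obstacle is that the drift bound is self-referential: after summing $e=1,\dots,E$ and $m=1,\dots,M$, switching the order of summation (and crudely bounding $e-1\le E$) yields a term of the form $\kappa^2\lambda^2 E^2\cdot Q$, where $Q\triangleq\sum_m\sum_e\bbE[\lVert\bfg_m^{t,e}\rVert^2]$ is exactly the quantity being bounded. Writing $Q \le 4\kappa^2\lambda^2 E^2 Q + (\text{all other bounded contributions})$ and invoking $\lambda \le \frac{1}{2\sqrt{2}\kappa E}$ gives $4\kappa^2\lambda^2 E^2 \le \tfrac{1}{2}$, which allows absorbing the drift into the left-hand side and multiplying the remaining terms by $2$, producing the claimed coefficient of $8$ on each of the three surviving terms. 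The only care needed is in tracking constants through the four-way splitting inequality and ensuring the double-summation swap is done correctly before invoking the step-size condition.
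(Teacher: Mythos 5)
Your proposal follows essentially the same route as the paper's proof: the identical four-term decomposition of $\bfg_m^{t,e}$, the $\lVert\sum_{i=1}^4\bfa_i\rVert^2\le 4\sum_i\lVert\bfa_i\rVert^2$ split, Lemma~\ref{lem:bounded} for the variance piece, $\kappa$-smoothness plus Cauchy--Schwarz for the drift, and absorption of the self-referential term using $4\lambda^2\kappa^2E^2\le\tfrac{1}{2}$. The one bookkeeping detail worth noting is that the $M^2E\delta^2$ factor comes from bounding each individual heterogeneity term by $M\delta^2$ via Assumption (A4) before summing over $m$ and $e$, rather than from an additional Cauchy--Schwarz step.
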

\begin{proof}
    For ease of notation, we omit the super script $t$.
    
    Firstly, $\bbE[\lVert \bfg_m^{e} \rVert^2]$ can be bounded as in \eqref{lem_g_01}-\eqref{lem_g_02} where \eqref{sumsum_g_01} holds due to \eqref{eq:smooth} in \textbf{Assumption 1}, \textbf{Lemma \ref{lem:bounded}} and \eqref{eq:diversity} in \textbf{Assumption 4}.
\begin{figure*}
    \begin{align}
    &\bbE[\lVert \bfg_m^{e} \rVert^2]
    =\bbE[\lVert \bfg_m^{e} - \nabla L_m(\bsth_m^{e-1}) + \nabla L_m(\bsth_m^{e-1}) - \nabla L_m(\bsth) + \nabla L_m(\bsth) - \nabla L(\bsth) + \nabla L(\bsth)  \rVert^2] \label{lem_g_01}  \\
    &\le 4 \bbE[\lVert \bfg_m^{e} - \nabla L_m(\bsth_m^{e-1}) \rVert^2+\lVert \nabla L_m(\bsth_m^{e-1}) - \nabla L_m(\bsth) \rVert^2+\lVert \nabla L_m(\bsth) - \nabla L(\bsth)  \rVert^2+\lVert \nabla L(\bsth)\rVert^2] \\
    &\lr\le{sumsum_g_01} 4\sigma^2 + 4\kappa^2 \bbE[\lVert \bsth_m^{e-1} - \bsth \rVert^2] + 4 M \delta^2 + 4 \bbE[\lVert \nabla L(\bsth) \rVert^2] \\
    &= 4\sigma^2 + 4\kappa^2 \bbE\bigg[\bigg\lVert \sum_{\epsilon=2}^{e}\lambda\bfg_{m}^{\epsilon-1}\bigg\rVert^2\bigg] + 4 M \delta^2 + 4 \bbE[\lVert \nabla L(\bsth) \rVert^2] \\
    &\le 4\sigma^2 + 4\lambda^2\kappa^2  E \sum_{e=1}^{E}\bbE[\lVert \bfg_{m}^{e}\rVert^2] + 4 M \delta^2 + 4 \bbE[\lVert \nabla L(\bsth) \rVert^2]. \label{lem_g_02}
    \end{align}
    \hrule
\end{figure*}
Then, by summing \eqref{lem_g_02} from $e=1$ to $E$ and for $m\in\calM$, it can be described as
\begin{align}
    &\sum_{m\in\calM}\sum_{e=1}^{E}\bbE[\lVert \bfg_m^{e} \rVert^2] \nn \\ 
    &\qquad\le 4 M E \sigma^2 + 4 \lambda^2\kappa^2  E^2 \sum_{m\in\calM}\sum_{e=1}^{E}\bbE[\lVert \bfg_m^{e} \rVert^2] \nn \\
    &\qquad\qquad+ 4M^2E\delta^2 + 4ME\cdot \bbE[\lVert \nabla L(\bsth) \rVert^2], \nn
\end{align}
    and it can be rewritten as
    \begin{align}
        &(1-4\lambda^2\kappa^2 E^2) \sum_{m\in\calM}\sum_{e=1}^{E}\bbE[\lVert \bfg_m^{e} \rVert^2] \nn \\
        &\le  4 M E \sigma^2 + 4M^2E\delta^2 + 4ME\cdot \bbE[\lVert \nabla L(\bsth) \rVert^2]. \label{lem_g_03}
    \end{align}
Since $\lambda<\frac{1}{2\sqrt{2}\kappa E}$ implies $1-4\lambda^2\kappa^2 E^2 > \frac{1}{2}$, \eqref{lem_g_03} can be reduced to the desired result.
\end{proof}

\section{Proof of Theorem \ref{lemma01}}
Based on $\kappa$-smoothness of $L_m$'s in \eqref{eq:smooth2} in \textbf{Assumption 1} for $\bfx=\bsth^{t}$ and $ \bfy=\bsth^{t+1}$, we can obtain the following inequality.
\begin{align}
    &\bbE[L(\bsth^{t+1})] - \bbE[L(\bsth^{t})] \nn \\
    &\le 
     \underbrace{\bbE[\nabla L(\bsth^t)^{\mathsf{T}}(\bsth^{t+1} - \bsth^t)]}_{\triangleq\text{(A.1)}} 
    + \frac{\kappa}{2}\cdot \underbrace{\bbE[\lVert\bsth^{t+1} - \bsth^t\rVert^2]}_{\triangleq\text{(A.2)}}. \label{pflem:eq01}
\end{align}
For ease of notation, we omit the superscript $t$ again.

Firstly, we will derive the upper bound of (A.1). It can be rewritten as 
\begin{align}
    &\text{(A.1)}\lr={appB01} \bbE\bigg[\nabla L(\bsth)^{\mathsf{T}}
    \sum_{m\in\calM}\rho_m\Delta_m\odot\mathbbm{1}_{\calH(\tau)}(\tilde{\bfh}_{m})\bigg] \nn \\ 
    &\lr={appB02} -\lambda e^{-\tau}  \bbE\bigg[\nabla L(\bsth)^{\mathsf{T}}
    \sum_{m\in\calM}\rho_m\sum_{e=1}^E \bfg_{m}^{e}\bigg] \nn \\ 
    &\lr{=}{appB03} -\lambda e^{-\tau} \sum_{e=1}^E \bbE\bigg[\nabla L(\bsth)^{\mathsf{T}}
    \sum_{m\in\calM}\rho_m\nabla L_m(\bsth_m^{e-1})\bigg] \nn \\ 
    &\lr{=}{appB04} -\frac{\lambda e^{-\tau}}{2} \sum_{e=1}^{E}\bigg( \bbE[\lVert\nabla L(\bsth)\rVert^2] \nn \\
    &\quad+ \bbE\bigg[\bigg\lVert \sum_{m\in\calM}\rho_m\nabla L_m(\bsth_{m}^{e-1})\bigg\rVert^2 \bigg] \nn \\ 
    &\quad -  \underbrace{\bbE\bigg[\bigg\lVert \nabla L(\bsth) - \sum_{m\in\calM} \rho_m \nabla L_m(\bsth_{m}^{e-1}) \bigg\rVert^2\bigg]}_{\triangleq\text{(B)}} \bigg)
   \label{ineq:partA}
\end{align}
for $\tilde{\bfh}_m = [(\tilde{\bfh}_{m}^{1})^{\mathsf{T}},\cdots, (\tilde{\bfh}_{m}^{N})^{\mathsf{T}}]^{\mathsf{T}}$  
where \eqref{appB01} holds due to independence of $\bfw^n$ and $\bbE[\bfw^n]=\mathbf{0}_F$,
\eqref{appB02} holds due to  
\begin{align}
    \bbE[\mathbbm{1}_{\calH(\tau)}(\tilde{h}_{m,f}^n)]
    & = \frac{1}{U}\sum_{u=1}^{U}\bbE[\mathbbm{1}_{\calH(\tau)}(h_{m,u,f}^{n})] \nn \\
    &=e^{-\tau} \nn
\end{align}
when port is selected uniformly at random, \eqref{appB03} holds due to 
\begin{align}
    \bbE[\bfg_m^{e}] &= \frac{1}{|\calB_m^{e}|}\sum_{\bfs\in\calB_m^{e-1}}\bbE[ \nabla l(\bfs;\bsth_m^{e-1}) ] \nn \\
    &=\nabla L_m(\bsth_m^{e-1})  \nn
\end{align}
 and \eqref{appB04} holds due to $\bfa^{\mathsf{T}} \bfb = \frac{1}{2}(\lVert \bfa \rVert^2 + \lVert \bfb \rVert^2 - \lVert \bfa - \bfb \rVert^2)$ for any $\bfa$ and $\bfb$.
Furthermore, (B) can be bounded by
\begin{align}
    \text{(B)}&=\bbE\bigg[\bigg\lVert \sum_{m\in\calM} \rho_m (\nabla L_m(\bsth) -  \nabla L_m(\bsth_{m}^{e-1})) \bigg\rVert^2\bigg] \nn \\
    &\le \sum_{m\in\calM} \rho_m\bbE[\lVert  \nabla L_m(\bsth) -  \nabla L_m(\bsth_{m}^{e-1}) \rVert^2] \nn \\ 
    &\lr{\le}{appB05} \kappa^2 \sum_{m\in\calM} \rho_m  \cdot  {\bbE[\lVert\bsth^t-\bsth_m^{e-1}\rVert^2]}  \nn\\
    &= \lambda^2\kappa^2   \sum_{m\in\calM} \rho_m \bbE\bigg[\bigg\lVert \sum_{\epsilon=2}^{e}\bfg_m^{\epsilon-1}\bigg\rVert^2\bigg]    \nn \\
    &\le \lambda^2 \kappa^2  E \sum_{m\in\calM} \sum_{e=1}^{E} \bbE[\lVert \bfg_m^{e} \rVert^2]
    \label{eq:B_result}
\end{align}
where \eqref{appB05} holds due to $\kappa$-smoothness of $L_m$'s in \eqref{eq:smooth} in \textbf{Assumption 1}.
Then, by applying \textbf{Lemma \ref{lem:sumsum_g}} to \eqref{eq:B_result}, we can achieve the following result.
\begin{align}
    &\text{(B)} 
    \le 8 \lambda^2 \kappa^2  M E^2 \sigma^2 + 8\lambda^2 \kappa^2M^2E^2\delta^2 \nn\\
    &\qquad\qquad+ 8\lambda^2 \kappa^2ME^2\cdot \bbE[\lVert \nabla L(\bsth) \rVert^2]. \label{B_result}
\end{align}
By substituting \eqref{B_result} to \eqref{ineq:partA}, we can obtain
\begin{align}
    &\text{(A.1)} \le -\frac{\lambda e^{-\tau} E}{2} (1 - 8\lambda^2\kappa^2 M E^2) \bbE[\lVert \nabla L(\bsth)\rVert^2] \nn \\
    &\qquad - \frac{ \lambda e^{-\tau} }{2} \sum_{e=1}^E  \bbE\bigg[\bigg\lVert \sum_{m\in\calM}\rho_m\nabla L_m(\bsth_{m}^{e-1})\bigg\rVert^2\bigg] \nn \\
    &\qquad + 4 \lambda^3 e^{-\tau} \kappa^2 M E^3 \sigma^2 + 4 \lambda^3 e^{-\tau} \kappa^2 M^2 E^3 \delta^2 \nn \\
    &\le -\frac{\lambda e^{-\tau} E}{4}  \bbE[\lVert \nabla L(\bsth)\rVert^2] \nn \\
    &\qquad + 4 \lambda^3 e^{-\tau} \kappa^2 M E^3 \sigma^2 + 4 \lambda^3 e^{-\tau} \kappa^2 M^2 E^3 \delta^2,
    \label{A1_result}  
\end{align}
since $\lambda<\frac{e^{-\tau}}{32\kappa M E}<\frac{1}{4\kappa\sqrt{M}E}$ implies $1-8\lambda^2 \kappa^2 M E^2 < \frac{1}{2}$. 

Secondly, (A.2) can be derived as 
\begin{align}
    &\text{(A.2)}= \bbE\bigg[\bigg\lVert \sum_{m\in\calM} \rho_m\Delta_m\odot\mathbbm{1}_{\calH(\tau)}(\bfh_{m}) +\tilde{\bfw} \bigg\rVert^2\bigg] \nn \\
    &\lr{=}{appB06} \underbrace{\bbE\bigg[\bigg\lVert \sum_{m\in\calM} \rho_m\Delta_m\odot\mathbbm{1}_{\calH(\tau)}(\bfh_{m}) \bigg\rVert^2\bigg]}_{\triangleq\text{(C.1)}} + \underbrace{\bbE[\lVert \tilde{\bfw} \rVert^2]}_{\triangleq\text{(C.2)}} \label{ineq:partB}
\end{align}
where \eqref{appB06} comes from independence of $\bfw^n$ and $\bbE[\bfw^n]=\mathbf{0}_F$. Then, (C.1) can be bounded by
\begin{align}
    \text{(C.1)}
    &\le \lambda^2 E\sum_{m\in\calM} \sum_{e=1}^{E}\bbE[\lVert  \bfg_m^{e}  \odot\mathbbm{1}_{\calH(\tau)}(\bfh_{m}) \rVert^2]  \nn \\
    &\lr\le{appB07} \lambda^2 E\sum_{m\in\calM} \sum_{e=1}^{E}\bbE[\lVert  \bfg_m^{e}  \rVert^2] \nn \\
    &\lr\le{appB08} 8\lambda^2 M E^2 \sigma^2 + 8\lambda^2 M^2E^2\delta^2 \nn \\
    & \qquad \qquad + 8\lambda^2 ME^2 \cdot \bbE[\lVert \nabla L(\bsth) \rVert^2], \label{C1result}
\end{align}
where \eqref{appB07} and \eqref{appB08} hold due to $\lVert \bfa \odot \mathbbm{1}_{\calA}(\bfb)  \rVert \le \lVert \bfa \rVert$ for any vector $\bfa, \bfb$ and set $\calA$ and \textbf{Lemma \ref{lem:sumsum_g}}, respectively. Also, (C.2) can be rewritten as
\begin{align}
    \text{(C.2)}
    &= \sum_{n=1}^{N}\bbE[\lVert\tilde{\bfw}^{n}\rVert^2]    
    \nn \\
    &=\sum_{n=1}^{N} \bbE[\bbE[\lVert\tilde{\bfw}^{n}\rVert^2|\gamma^{n}]] \nn \\
    &=\sum_{n=1}^{N} N_0F\cdot \bbE[(\gamma^{n})^{-2}] \label{C2result} 
\end{align}
Therefore, by substituting \eqref{C1result} and \eqref{C2result} to  \eqref{ineq:partB}, we can derive the following upper bound of (A.2).
\begin{align}
    &\text{(A.2)}
    \le 8\lambda^2 M E^2 \sigma^2 + 8 \lambda^2 M^2 E^2 \delta^2 \nn \\
    &+ 8\lambda^2 M E^2 \cdot \bbE[\lVert \nabla L(\bsth) \rVert^2]
    +\sum_{n=1}^{N} \bbE\Bigg[ \frac{N_0 F}{(\gamma^{n})^{-2}} \Bigg] \label{A2result}
\end{align}
To avoid confusion, we will use the superscript $t$
again. By substituting \eqref{A1_result} and \eqref{A2result} to \eqref{pflem:eq01}, we can achieve the following inequality.
\begin{align}
    &\bbE[L(\bsth^{t+1})] - \bbE[L(\bsth^t)] \nn \\
    &\le -\frac{\lambda  E}{4}( e^{-\tau} - 16 \lambda \kappa M E)  \bbE[\lVert \nabla L(\bsth^t)\rVert^2] \nn \\
    &+ 4 \lambda^3 e^{-\tau} \kappa^2 M E^3 \sigma^2 + 4 \lambda^3 e^{-\tau} \kappa^2 M^2 E^3 \delta^2 \nn \\
    &+4\lambda^2 \kappa M E^2 \sigma^2 + 4 \lambda^2 \kappa M^2 E^2 \delta^2 +\sum_{n=1}^{N} \bbE\Bigg[ \frac{\kappa N_0 F}{2(\gamma^{t,n})^{-2}} \Bigg] \nn \\
     &\le -\frac{\lambda e^{-\tau}  E}{8}  \bbE[\lVert \nabla L(\bsth^t)\rVert^2] \nn \\
    &+ 4 \lambda^3 e^{-\tau} \kappa^2 M E^3 \sigma^2 + 4 \lambda^3 e^{-\tau} \kappa^2 M^2 E^3 \delta^2 \nn \\
    &+4\lambda^2 \kappa M E^2 \sigma^2 + 4 \lambda^2 \kappa M^2 E^2 \delta^2 +\sum_{n=1}^{N} \bbE\Bigg[ \frac{\kappa N_0 F}{2(\gamma^{t,n})^{-2}} \Bigg] \nn
\end{align}
where $\lambda < \frac{e^{-\tau}}{32 \kappa M E }$ implies $e^{-\tau} - 16\lambda \kappa ME > \frac{e^{-\tau}}{2}$.
Reformulating the above result, we can obtain 
\begin{align}
    &\frac{\lambda e^{-\tau} E}{8} \bbE[\lVert \nabla L(\bsth^t)\rVert^2] \le \bbE[L(\bsth^t)] - \bbE[L(\bsth^{t+1})] \nn \\ 
    &+ 4 \lambda^3 e^{-\tau} \kappa^2 M E^3 \sigma^2 + 4 \lambda^3 e^{-\tau} \kappa^2 M^2 E^3 \delta^2 \nn \\
    &+4\lambda^2 \kappa M E^2 \sigma^2 + 4 \lambda^2 \kappa M^2 E^2 \delta^2 +\sum_{n=1}^{N} \bbE\Bigg[ \frac{\kappa N_0 F}{2(\gamma^{t,n})^{-2}} \Bigg]\label{reform_01}
\end{align}
By summing \eqref{reform_01} from $t=1$ to $t=T$ and divide the both side with $\lambda e^{-\tau} E T/ 8$, we can achieve the desired result as
\begin{align}
    &\bbE\bigg[\frac{1}{T}\sum_{t=1}^{T}\lVert \nabla L(\bsth^{t}) \rVert^2 \bigg] 
    \lr\le{lr12} \frac{8(L(\bsth^1) - L(\bsth^*))}{\lambda e^{-\tau} E T} \nn \\
    &\qquad+32  \lambda^2  \kappa^2 M E^2 \sigma^2 + 32 \lambda^2  \kappa^2 M^2 E^2 \delta^2 \nn \\
    &\qquad+\frac{32\lambda \kappa M E \sigma^2}{e^{-\tau}} + \frac{32\lambda \kappa M^2 E \delta^2}{e^{-\tau}} \nn \\
    &\qquad+\frac{4\kappa N_0  F}{\lambda e^{-\tau} E T}\sum_{t=1}^{T}\sum_{n=1}^{N}\bbE[ (\gamma^{t,n})^{-2}], \nn
\end{align}
where \eqref{lr12} holds since $L(\bsth^*)$ is the minimum.

\section{Proof of Theorem \ref{lemma02}}

For ease of notation, we omit the superscripts $t$ and $n$.
The expectation of $\bbE[\bar{\gamma}^{-2}]$ can be bounded by
\begin{align}
    \bbE[\bar{\gamma}^{-2}]
    &\lr\le{def_gamma} \bbE[\bar{\gamma}_2^{-2}]\nn \\
    &= \bbE\Bigg[ \frac{\beta^2 \bar{\zeta}^{-2}}{P} \Bigg] \nn \\
    &= \frac{1}{P}\cdot \underbrace{\bbE[\beta^2]}_{\triangleq\text{(D.1)}}\underbrace{\bbE[\bar{\zeta}^{-2}]}_{\triangleq\text{(D.2)}} \label{result_L1L2}
\end{align}
for 
\begin{align}
    \bar{\zeta} &= \min_{m}\Bigg(\sum_{f=1}^{F}\lvert 
    \tilde{h}_{m,f} \rvert^{-2} \mathbbm{1}_{\calH(\tau)}(\tilde{h}_{m,f}) \Bigg)^{-\frac{1}{2}} \nn
\end{align}
with selected port in \eqref{eq:selection02} where \eqref{def_gamma} holds due to the definition of selection rule.

Firstly, we will find a upper bound of (D.1). Note that  by the definition of $\beta$ in \eqref{def:beta}, we can obtain 
\begin{align}
    \text{(D.1)} &= \bbE\bigg[\max_{m,j} \lvert\delta_{m,j}\rvert^2\bigg] \nn \\
    &\lr\le{lemlr01} \bbE\bigg[\max_m \lVert \Delta_m \rVert^2\bigg] \nn \\
    &= \lambda^2\bbE\Bigg[\max_m \Bigg\lVert \sum_{e=1}^{E}  \bfg_m^{e} \Bigg\rVert^2\Bigg] \nn \\
    &\lr\le{lemlr02} \lambda^2 \sum_{m\in\calM} \bbE\Bigg[ \Bigg\lVert \sum_{e=1}^{E}  \bfg_m^{e} \Bigg\rVert^2 \Bigg] \nn \\
    &\lr\le{lemlr03} \lambda^2 E \sum_{m\in\calM}\sum_{e=1}^{E} \bbE[ \lVert \bfg_m^{e} \rVert^2] \nn \\
    &\lr\le{lemlr04} 8\lambda^2 M E^2 \sigma^2 + 8\lambda^2 M^2E^2\delta^2 \nn \\
    &\qquad\qquad+ 8\lambda^2 ME^2 \cdot \bbE[\lVert \nabla L(\bsth) \rVert^2], \label{result_L_1}
\end{align}
where \eqref{lemlr01} and \eqref{lemlr02} hold from the fact that $\max_m \lVert \bfa_m \rVert^2 \le \sum_m\lVert \bfa_m \rVert^2$ for any $\bfa_m$, \eqref{lemlr03} holds due to Cauchy-Scharwz inequality and \eqref{lemlr04} holds from Lemma \ref{lem:sumsum_g}. 

Secondly, we define a new random variable $\zeta$ as
\begin{align}
    \zeta = \min_m \bigg( \tau^{-2} \max_{u} \nu_{m,u} \bigg)^{-\frac{1}{2}}, \nn
\end{align}
where $\nu_{m,u}=\mathbf{1}_F^{\mathsf{T}} \mathbbm{1}_{\calH(\tau)}(\bfh_{m,u})$. For any device $m$, 
\begin{align}
    \sum_{f=1}^{F} \lvert \tilde{h}_{m,f} \rvert^{-2} \mathbbm{1}_{\calH(\tau)}(\tilde{h}_{m,f}) 
    &\le \tau^{-2}  \nu_{m,u_{m}} \nn \\
    &= \tau^{-2} \max_u \nu_{m,u}. \nn
\end{align}
Therefore, we can obtain that $\zeta\le \bar{\zeta}$.
Before addressing (D.2), we can derive the cumulative distribution function of $\zeta^{-2}$ for $x>0$ as follows.
\begin{align}
    &\Pr[\zeta^{-2}\le x] = \Pr\bigg[ \zeta \ge \frac{1}{\sqrt{x}} \bigg] \nn\\
    &=\Pr\Bigg[ \min_m \Big( \tau^{-2}\max_u \nu_{m,u} \Big)^{-\frac{1}{2}} \ge\frac{1}{\sqrt{x}} \Bigg] \nn \\
    &=\Pr\Bigg[ \bigcap_m \Big( \tau^{-2}\max_u  \nu_{m,u} \Big)^{-\frac{1}{2}} \ge\frac{1}{\sqrt{x}} \Bigg] \nn \\
    &=\prod_m \Pr\big[ \max_u\nu_{m,u} \le \tau^2 x \big] \nn  \\ 
    &= \prod_m \underbrace{\Pr \bigg[\bigcap_u \nu_{m,u} \le \tau^2 x \bigg]}_{\triangleq\text{(E)}}. \nn
\end{align}
Then, (E) has a lower bound as 
\begin{align}
    &\text{(E)} \nn \\
    &= \Pr \Bigg[ \bigcap_u \mathbf{1}_F^{\mathsf{T}} \mathbbm{1}_{\calH(\tau^2)}((1-\mu_u^2)\bfz_{m,u} + \mu_u^2 \bfz_{m,1}) \le \tau^2 x \Bigg]\nn\\
    &\ge\Pr \Bigg[ \bigcap_u \mathbf{1}_F^{\mathsf{T}} \big(\mathbbm{1}_{\calH(p_{u})}(\bfz_{m,u}) + \mathbbm{1}_{\calH(q_u)}(\bfz_{m,1})\big) \le \tau^2 x \Bigg] \nn \\
    &\ge \Pr\Bigg[ \bigcap_u \mathbf{1}_F^\mathsf{T} \mathbbm{1}_{\calH(r)}(\bfz_{m,u}) \le \tau^2 x -F \Bigg] \nn \\
    &= \prod_u \underbrace{\Pr\Big[ \mathbf{1}_F^\mathsf{T} \mathbbm{1}_{\calH(r)}(\bfz_{m,u}) \le \tau^2 x -F \Big]}_{\triangleq \text{(F)}} \nn
\end{align}
where $\bfz_{m,u}=[z_{m,u,1},\cdots,z_{m,u,F}]^{\mathsf{T}}$ with $z_{m,u,f} = \frac{1}{2}(\lvert a_{m,u,f} \rvert^2 + \lvert b_{m,u,f} \rvert^2)$ following independent exponential distribution with exp(1), and
$p_u=\frac{\tau^2}{2(1-\mu_u^2)}$, $q_u=\frac{\tau^2}{2\mu_u^2}$ and $r = \frac{\tau^2}{2(1-\mu_{\min}^2)}$ for $\mu_{\min} = \min_u \mu_u$.
Note that $\mathbf{1}_F^{\mathsf{T}} \mathbbm{1}_{\calH(r)}(\bfz_{m,u})=\sum_{f} \mathbbm{1}_{\calH(r)}(z_{m,u,f})$ follows the binomial distribution with $\calB(F,\hat{p})$ with $\hat{p}=e^{-r}$. So, (F) can be described as
\begin{align}
    \text{(F)}= \sum_{f=0}^{\lfloor \tau^2 x - F \rfloor}  \binom{F}{f} \hat{p}^f (1-\hat{p})^{F-f}. \nn
\end{align}
Therefore, for $x$ with $ F\tau^{-2}\le x \le 2F\tau^{-2}$,
\begin{align}
    &\Pr[\zeta^{-2}\le x] \nn \\
    &\qquad \ge \Bigg( \sum_{f=0}^{\lfloor \tau^2 x - F \rfloor}  \binom{F}{f} \hat{p}^f (1-\hat{p})^{F-f}\Bigg)^{UM} \nn\\
    &\qquad\ge (1-\hat{p})^{FMU}, \nn
\end{align}
$\Pr[\zeta^{-2}\le x] = 1$ for $x\ge 2F\tau^{-2}$ and $\Pr[\zeta^{-2}\le x] = 0$ for $x< F\tau^{-2}$. Going back to (D.2), we can find the upper bound of it as 
\begin{align}
    \text{(D.2)} &\lr\le{lemlr05} \bbE[{\zeta}^{-2}]\nn \\
    &= \int_{0}^{\infty} 1 - \Pr[{\zeta}^{-2}\le x] dx \nn \\ 
    &= \int_{0}^{2F\tau^{-2}} 1 - \Pr[{\zeta}^{-2}\le x] dx \nn \\
    &\le (2 - (1-\hat{p})^{FMU}) \times F\tau^{-2}, \label{result_L_2}
\end{align}
where \eqref{lemlr05} holds due to $\zeta\le\bar{\zeta}$.
By substituting \eqref{result_L_1} and \eqref{result_L_2} to \eqref{result_L1L2}, we can obtain the desired result as follows.
\begin{align}
    &\bbE[\bar{\gamma}^{-2}] \le (2 - (1-\hat{p})^{FMU}) \nn \\
    &\qquad\times \frac{8\lambda^2 M E^2 F}{P\tau^2}
    (   \sigma^2 +  M  \delta^2 + \bbE[\lVert \nabla L(\bsth) \rVert^2]). \nn
\end{align}

\section{Proof of Theorem \ref{thm:thm01}}
By substituting the result of \textbf{Lemma \ref{lemma02}} to \textbf{Lemma \ref{lemma01}}, we can obtain
\begin{align}
&\bbE\bigg[\frac{1}{T}\sum_{t=1}^{T}\lVert \nabla L(\bsth^{t}) \rVert^2 \bigg] 
    \le \frac{8(L(\bsth^1) - L(\bsth^*))}{\lambda e^{-\tau} E T} \nn \\
    &\quad+32  \lambda^2  \kappa^2 M E^2 \sigma^2 + 32 \lambda^2  \kappa^2 M^2 E^2 \delta^2 \nn \\
    &\quad+\frac{32\lambda \kappa M E \sigma^2}{e^{-\tau}} + \frac{32\lambda \kappa M^2 E \delta^2}{e^{-\tau}} \nn \\ 
    &\quad + \frac{32\lambda\kappa N_0 M E F^2 N \omega\sigma^2}{P \tau^2 e^{-\tau}  } + \frac{32\lambda\kappa N_0 M^2 E F^2 N \omega\delta^2}{P \tau^2 e^{-\tau}  }\nn\\
    &\quad+\frac{32\lambda\kappa N_0 M E F^2 N \omega}{P \tau^2 e^{-\tau}  }\bbE\bigg[\frac{1}{T}\sum_{t=1}^{T}\lVert \nabla L(\bsth^{t}) \rVert^2 \bigg] \nn.
\end{align}
By rearranging the above result, we can obtain
\begin{align}
    &\bigg(1-\frac{32\lambda\kappa N_0 M E F^2 N \omega}{P \tau^2 e^{-\tau}  }\bigg)\bbE\bigg[\frac{1}{T}\sum_{t=1}^{T}\lVert \nabla L(\bsth^{t}) \rVert^2 \bigg] \nn \\
    &\le \frac{8(L(\bsth^1) - L(\bsth^*))}{\lambda e^{-\tau} E T} \nn \\
    &\quad+32  \lambda^2  \kappa^2 M E^2 \sigma^2 + 32 \lambda^2  \kappa^2 M^2 E^2 \delta^2 \nn \\
    &\quad+\frac{32\lambda \kappa M E \sigma^2}{e^{-\tau}} + \frac{32\lambda \kappa M^2 E \delta^2}{e^{-\tau}} \nn \\ 
    &\quad + \frac{32\lambda\kappa N_0 M E F^2 N \omega\sigma^2}{P \tau^2 e^{-\tau}  } + \frac{32\lambda\kappa N_0 M^2 E F^2 N \omega\delta^2}{P \tau^2 e^{-\tau}  }\nn,
\end{align}
and since $\lambda<\frac{P\tau^2 e^{-\tau}}{64\kappa N_0 M E F^2 N \omega}$, we can reach the desired result.




\ifCLASSOPTIONcaptionsoff
  \newpage
\fi

\bibliographystyle{IEEEtran}
\bibliography{IEEEabrv,FAS+OFL}

\begin{thebibliography}{10}
\providecommand{\url}[1]{#1}
\csname url@samestyle\endcsname
\providecommand{\newblock}{\relax}
\providecommand{\bibinfo}[2]{#2}
\providecommand{\BIBentrySTDinterwordspacing}{\spaceskip=0pt\relax}
\providecommand{\BIBentryALTinterwordstretchfactor}{4}
\providecommand{\BIBentryALTinterwordspacing}{\spaceskip=\fontdimen2\font plus
\BIBentryALTinterwordstretchfactor\fontdimen3\font minus \fontdimen4\font\relax}
\providecommand{\BIBforeignlanguage}[2]{{%
\expandafter\ifx\csname l@#1\endcsname\relax
\typeout{** WARNING: IEEEtran.bst: No hyphenation pattern has been}%
\typeout{** loaded for the language `#1'. Using the pattern for}%
\typeout{** the default language instead.}%
\else
\language=\csname l@#1\endcsname
\fi
#2}}
\providecommand{\BIBdecl}{\relax}
\BIBdecl

\bibitem{arXiv15_Konecny}
\BIBentryALTinterwordspacing
J.~Konečný, B.~McMahan, and D.~Ramage, ``Federated optimization: Distributed optimization beyond the datacenter,'' \emph{arXiv}, vol. abs/1511.03575, 2015. [Online]. Available: \url{https://arxiv.org/abs/1511.03575}
\BIBentrySTDinterwordspacing

\bibitem{SIGSAC15_Shokri}
R.~Shokri and V.~Shmatikov, ``Privacy-preserving deep learning,'' in \emph{Proceedings of the ACM SIGSAC Conference on Computer and Communications Security}.\hskip 1em plus 0.5em minus 0.4em\relax Denver, Colorado, USA: Association for Computing Machinery, 2015, pp. 1310--1321.

\bibitem{AISTATS17_Mcmahan}
B.~McMahan, E.~Moore, D.~Ramage, S.~Hampson, and B.~A. y.~Arcas, ``Communication-efficient learning of deep networks from decentralized data,'' in \emph{Proceedings of the International Conference on Artificial Intelligence and Statistics (AISTATS)}, Ft. Lauderdale, Florida, USA, 2017.

\bibitem{WFL01}
W.~Y.~B. Lim, N.~C. Luong, and D.~T.~H. et~al., ``Federated learning in mobile edge networks: a comprehensive survey,'' \emph{IEEE Communications Surveys \& Tutorials}, vol.~22, no.~3, pp. 2031--2063, 2020.

\bibitem{WFL02}
L.~U. Khan, W.~Saad, Z.~Han, E.~Hossain, and C.~S. Hong, ``Federated learning for internet of things: recent advances, taxonomy, and open challenges,'' \emph{IEEE Communications Surveys \& Tutorials}, vol.~23, no.~3, pp. 1759--1799, 2021.

\bibitem{WFL03}
A.~Imteaj, U.~Thakker, S.~Wang, J.~Li, and M.~H. Amini, ``A survey on federated learning for resource constrained iot devices,'' \emph{IEEE Internet of Things Journal}, vol.~9, no.~1, pp. 1--24, 2022.

\bibitem{WFL04}
S.~Niknam, H.~S. Dhillon, and J.~H. Reed, ``Federated learning for wireless communications: motivation, opportunities, and challenges,'' \emph{IEEE Communications Magazine}, vol.~58, no.~6, pp. 46--51, 2020.

\bibitem{WFL05}
S.~Abdulrahman, H.~Tout, H.~Ould-Slimane, A.~Mourad, C.~Talhi, and M.~Guizani, ``A survey on federated learning: the journey from centralized to distributed on-site learning and beyond,'' \emph{IEEE Internet of Things Journal}, vol.~8, no.~7, pp. 5476--5497, 2021.

\bibitem{CommFL01}
K.~Yue, R.~Jin, C.~W. Wong, and H.~Dai, ``Communication-efficient federated learning via predictive coding,'' \emph{IEEE Journal of Selected Topics in Signal Processing}, vol.~16, no.~3, pp. 369--380, 2022.

\bibitem{CommFL02}
Y.~Yang, Z.~Zhang, and Q.~Yang, ``Communication-efficient federated learning with binary neural networks,'' \emph{IEEE Journal on Selected Areas in Communications}, vol.~39, no.~12, pp. 3836--3850, 2021.

\bibitem{Mine01}
S.~Park and W.~Choi, ``Regulated subspace projection based local model update compression for communication-efficient federated learning,'' \emph{IEEE Journal on Selected Areas in Communications}, vol.~41, no.~4, pp. 964--976, 2023.

\bibitem{CommFL03}
M.~Elmahallawy, T.~Luo, and K.~Ramadan, ``Communication-efficient federated learning for leo constellations integrated with haps using hybrid noma-ofdm,'' \emph{IEEE Journal on Selected Areas in Communications}, vol.~42, no.~5, pp. 1097--1114, 2024.

\bibitem{AFL01}
K.~Yang, T.~Jiang, Y.~Shi, and Z.~Ding, ``Federated learning via over-the-air computation,'' \emph{IEEE Transactions on Wireless Communications}, vol.~19, no.~3, pp. 2022--2035, 2020.

\bibitem{AFL02}
M.~M. Amiri and D.~Gündüz, ``Machine learning at the wireless edge: distributed stochastic gradient descent over-the-air,'' \emph{IEEE Transactions on Signal Processing}, vol.~68, pp. 2155--2169, 2020.

\bibitem{AFL03}
------, ``Federated learning over wireless fading channels,'' \emph{IEEE Transactions on Wireless Communications}, vol.~19, no.~5, pp. 3546--3557, 2020.

\bibitem{AFL04}
N.~Zhang and M.~Tao, ``Gradient statistics aware power control for over-the-air federated learning,'' \emph{IEEE Transactions on Wireless Communications}, vol.~20, no.~8, pp. 5115--5128, 2021.

\bibitem{AFL05}
T.~Sery, N.~Shlezinger, K.~Cohen, and Y.~C. Eldar, ``Over-the-air federated learning from heterogeneous data,'' \emph{IEEE Transactions on Signal Processing}, vol.~69, pp. 3796--3811, 2021.

\bibitem{Mine02}
S.~Park and W.~Choi, ``On the differential privacy in federated learning based on over-the-air computation,'' \emph{IEEE Transactions on Wireless Communications}, vol.~23, no.~5, pp. 4269--4283, 2024.

\bibitem{Mine03}
J.-P. Hong, S.~Park, and W.~Choi, ``Base station dataset-assisted broadband over-the-air aggregation for communication-efficient federated learning,'' \emph{IEEE Transactions on Wireless Communications}, vol.~22, no.~11, pp. 7259--7272, 2023.

\bibitem{Mine04}
S.~Park and C.~W, ``Byzantine fault tolerant distributed stochastic gradient descent based on over-the-air computation,'' \emph{IEEE Transactions on Communications}, vol.~70, no.~5, pp. 3204--3219, 2022.

\bibitem{AirCompFL01}
L.~Qiao, Z.~Gao, M.~B. Mashhadi, and D.~Gündüz, ``Massive digital over-the-air computation for communication-efficient federated edge learning,'' \emph{IEEE Journal on Selected Areas in Communications}, vol.~42, no.~11, pp. 3078--3094, 2024.

\bibitem{AirCompFL02}
B.~Jiang, J.~Du, C.~Jiang, Z.~Han, A.~Alhammadi, and M.~Debbah, ``Over-the-air federated learning in digital twins empowered uav swarms,'' \emph{IEEE Transactions on Wireless Communications}, vol.~23, no.~11, pp. 17\,619--17\,634, 2024.

\bibitem{AirCompFL03}
F.~M.~A. Khan, H.~Abou-Zeid, and S.~A. Hassan, ``Deep compression for efficient and accelerated over-the-air federated learning,'' \emph{IEEE Internet of Things Journal}, vol.~11, no.~15, pp. 25\,802--25\,817, 2024.

\bibitem{FAS}
K.~K. Wong, A.~Shojaeifard, K.~F. Tong, and Y.~Zhang, ``Fluid antenna systems,'' \emph{IEEE Transactions on Wireless Communications}, vol.~20, no.~3, pp. 1950--1962, 2021.

\bibitem{FAS_p2p01}
L.~Tlebaldiyeva, G.~Nauryzbayev, S.~Arzykulov, A.~Eltawil, and T.~Tsiftsis, ``Enhancing qos through fluid antenna systems over correlated nakagami-m fading channels,'' in \emph{Proceedings of IEEE Wireless Communications and Networking Conference (WCNC)}, Austin, TX, USA, 2022.

\bibitem{FAS_p2p02}
M.~Khammassi, A.~Kammoun, and M.~S. Alouini, ``A new analytical approximation of the fluid antenna system channel,'' \emph{IEEE Transactions on Wireless Communications}, vol.~22, no.~12, pp. 8843--8858, 2023.

\bibitem{FAS_p2p03}
C.~Psomas, G.~M. Kraidy, K.~K. Wong, and I.~Krikidis, ``On the diversity and coded modulation design of fluid antenna systems,'' \emph{IEEE Transactions on Wireless Communications}, vol.~23, no.~3, pp. 2082--2096, 2024.

\bibitem{FAS_p2p04}
P.~Mukherjee, C.~Psomas, and I.~Krikidis, ``On the level crossing rate of fluid antenna systems,'' in \emph{Proceedings of IEEE International Workshop on Signal Processing Advances in Wireless Communications (SPAWC)}, Oulu, Finland, 2022.

\bibitem{FAS_MAC01}
K.~K. Wong and K.~F. Tong, ``Fluid antenna multiple access,'' \emph{IEEE Transactions on Wireless Communications}, vol.~21, no.~7, pp. 4801--4815, 2022.

\bibitem{FAS_MAC02}
K.~K. Wong, K.~F. Tong, Y.~Chen, and Y.~Zhang, ``Closed-form expressions for spatial correlation parameters for performance analysis of fluid antenna systems,'' \emph{Electronics Letters}, vol.~58, no.~11, pp. 454--457, 2022.

\bibitem{FAS_MAC03}
K.~K. Wong, D.~Morales-Jimenez, and K.~F. Tong, ``Slow fluid antenna multiple access,'' \emph{IEEE Transactions on Communications}, vol.~71, no.~5, pp. 2831--2846, 2023.

\bibitem{AirCompFAS01}
D.~Zhang, S.~Ye, M.~Xiao, K.~Wang, M.~D. Renzo, and M.~Skoglund, ``Fluid antenna array enhanced over-the-air computation,'' \emph{IEEE Wireless Communications Letters}, vol.~13, no.~6, pp. 1541--1545, 2024.

\bibitem{AirCompFAS02}
N.~Li, P.~Wu, B.~Ning, L.~Zhu, and W.~Mei, ``Over-the-air computation via 2-d movable antenna array,'' \emph{IEEE Wireless Communications Letters}, vol.~14, no.~1, pp. 33--37, 2025.

\bibitem{Assumption}
A.~Fallah, A.~Mokhtari, and A.~Ozdaglar, ``Personalized federated learning with theoretical guarantees: A model-agnostic meta-learning approach,'' in \emph{Advances in Neural Information Processing Systems (NeurIPS)}, 2020, pp. 3557--3568.

\bibitem{MNIST}
L.~Deng, ``The mnist database of handwritten digit images for machine learning research,'' \emph{IEEE Signal Processing Magazine}, vol.~29, no.~6, pp. 141--142, 2012.

\bibitem{CIFAR10}
A.~Krizhevsky, ``Learning multiple layers of features from tiny images,'' University of Toronto, Tech. Rep., 2009.

\end{thebibliography}

\end{document}